\title{AbsSynthe: abstract synthesis from succinct safety
specifications\thanks{This work was supported by the ERC inVEST (279499)
project.}}
\author{Romain Brenguier, Guillermo A. P\'{e}rez,
Jean-Fran\c{c}ois Raskin \and Ocan Sankur
\institute{Universit\'{e} Libre de Bruxelles -- Brussels, Belgium}
\email{\{rbrengui,gperezme,jraskin,osankur\}@ulb.ac.be}
}
\begin{document}

\maketitle

\begin{abstract}
In this paper, we describe a synthesis algorithm for safety specifications
described as circuits. Our algorithm is based on fixpoint computations,
abstraction and refinement, it uses binary decision diagrams as symbolic data
structure. We evaluate our tool on the benchmarks provided by the organizers of
the synthesis competition organized within the SYNT'14 workshop.  
\end{abstract}

\section{Introduction}

The {\em model-checking} approach to verification of reactive systems is as
follows. Given a model of the system together with a description of the
environment in which it is embedded, and a specification that formalizes a
property of interest of the system, an algorithm verifies that all the possible
behaviors of the system within its environment comply with the specification.
Model-checking has been proposed in the eighties and is now a standard technique
to improve the reliability of reactive systems. 

{\em Synthesis} goes a step further: synthesis only requires a model of the environment together with a specification of a property that the system must enforce within the environment, it does not require a model of the system. From the description of the environment and the property, an algorithm tries to build automatically a system that is correct by construction, i.e. a system which enforces the specification. If such a system does not exist then the synthesis algorithm can also provide feedback in the form of a strategy for the environment that enforces the negation of the specification and so shows why the specification cannot be realized.  

The synthesis problem can be formalized as a two-player game on a graph with an
omega-regular objective.  While the theory that underlines those games is now
well understood, see e.g.~\cite{thomas95}, there are only a few implementations
available~\cite{ehlers10,fjr09,ss09,bks14} and the sizes of systems
on which synthesis has been applied are usually much smaller than the sizes of systems for which model-checking has been successfully applied.

This paper describes our experiences with building\GP{Changed ``attempts'' to
experiences}
a prototype of tool to participate to
the first synthesis competition organized within the SYNT'14 workshop. The set
up for this competition is as follows. Given a sequential circuit description
with exactly one output signal and a partition of its input signals into
\emph{controllable} inputs that belongs to the system (to synthesize) and
\emph{uncontrollable} inputs that belong to the environment,
decide if there is a strategy to choose the controllable input signals such that no
matter how the uncontrollable input signals are updated along the execution of
the circuit, the output of the circuit is always low. In terms of two player
games, the winning condition (the specification) for the system is thus a safety
objective. If a winning strategy exists for the system, then build a circuit
which implements this strategy. 

The realizability problem for safety specifications is known to be solvable in
linear time with respect to the size of the underlying game graph (see,
e.g.~\cite{gradel04,thomas95}). However, here the underlying graph is given
implicitly and succinctly by the circuit description and in this case the
problem is known to be complete for \EXP~(see, e.g. \cite{py86}). To combat the
state explosion problem, we adapt two classical techniques that have proven
useful in the context of model-checking: we use binary decision
diagrams~\cite{bryant86} as a data structure to represent and manipulate
symbolically sets of configurations of the circuit specification, and we use
abstraction and refinement to simplify the underlying game and lower its
dimension (the number of Boolean variables that are necessary for its
description). The abstraction refinement algorithm that we have defined can be
seen as combining the ideas of~\cite{ap07} and~\cite{hjm03} together with binary
decision diagrams (BDDs) and some additional heuristics, all this formalized
with abstract interpretation as in~\cite{hmmr00}.

We have implemented a fixpoint computation together with several optimizations
that lead to a synthesis algorithm that is able to handle circuits with several
tens of latches and a few hundreds of gates. We report on the experiments that
we have conducted on all the benchmarks provided by the organizers of the
synthesis competition that were available at the time of submission.
In a vast majority of the benchmarks the best performing\GP{Removed
``mitigated'' sentence} version of our algorithm is the {\em plain fixpoint} algorithm that does not use
abstraction at all. However, to be efficient, the explicit construction of the
BDDs for the transition relation needs to be avoided~\cite{bcl91}, our solution
is to use substitution of variables with BDDs as
in~\cite{cm90} to directly compute the effect of the
transition relation backwards. Nevertheless, for some examples, abstraction and
refinement are necessary: our algorithm based on abstraction and refinement
terminates while the basic concrete fixpoint computation does not. 
We think that
the lack of good performance of abstraction in our experiments is partly due to
the fact that there is no explicit structure in
the circuit description on which we apply our analysis.
In fact, we consider circuits given in a low-level description, which is neither
hierarchical nor compositional, so
the usual techniques used in program verification~\cite{cgjlv00}
does not seem to be applicable here.
Another reason could also be that
the benchmarks considered here are control intensive and not data oriented where
abstraction seems to perform better~\cite{cgt04}.  We strongly believe that more
research is necessary for understanding how to recover interesting aspects of
the structure present in the circuits from their low level descriptions and use
this structure in the abstraction procedure.  Finally, we also report on how to
synthesize a circuit from the winning region computed by our algorithm and how
to exploit reachability information to decrease the size of the synthesized
circuit with minimization of BDDs using don't care sets~\cite{hbbm97}.

\paragraph*{Related works} 
Recent efforts to find efficient algorithms for synthesis have been reported
in~\cite{bks14} where solutions based on solvers for QBF and SAT are investigated.
In that paper, the authors compare their solutions with a BDD implementation
that constructs explicitly the transition relation. The conclusions of their
paper resemble our conclusions: the BDD implementation usually outperforms the
QBF-SAT algorithms with the exception of a few examples. Our BDD implementation
that does not construct the transition relation is usually largely more
efficient than the one that constructs the BDD for the transitions relation.

In~\cite{nlbrw14} the authors present an algorithm for synthesis which searches
for a small set of plays that witness a winning strategy for one of the players.
They report their tool works well for games in which winning strategies admit
compact representations.

The problem of minimization of Boolean functions used in circuit constructions
has been widely studied in the logic synthesis community~(e.g.
\cite{mccluskey56,hbbm97}).

\section{Preliminaries}
We will present our algorithms in set-theoretic notation.  However, in order to
provide symbolic representations of sets and the implementation of set operators
we will also represent sets by Boolean functions, and use both notations
interchangeably.  Formally, we let $\mathbb{B} = \{0,1\}$, and if~$L$ denotes a
finite set of variables, a function $L \to \mathbb{B}$ is called a
\emph{valuation of $L$}. Note that a valuation~$v$ defines a subset $v^{-1}(1)$
of~$L$.  We will also consider Boolean functions $\mathbb{B}^L \rightarrow
\mathbb{B}$ to denote sets of valuations.

We will describe Boolean functions by \emph{first-order logic} formulas on a
given set of variables~$V$, which are made of propositional logic and
first-order quantification on~$V$.
A~\emph{formula}~$f$ whose free variables are~$X$ will be written~$f(X)$.
If the free variables are~$X\cup Y$ for two sets~$X,Y$, we may also write
$f(X,Y)$.
When we quantify over a set of variables~$L$, we will write
$\exists L$ instead of $\exists l_1\exists l_2 \ldots \exists l_n$ if
$L=\{l_1,\ldots,l_n\}$, and similarly for universal quantification.

%

Let $X,Y,Z$ be three sets of variables such that $Y\subseteq X$ and $X \cap Z = \varnothing$.  
Consider a formula $f(X)$ and a set of formulas $(g_y(Z))_{y \in Y}$ (one for each element in $Y$).  
We denote by $f[y \leftarrow g_y]_{y \in Y}$ the formula $f$ in which every $y \in Y$ has been substituted by the corresponding $g_y$.
Formally, $f[y \leftarrow g_y]_{y \in Y} (X \setminus Y,Z) = \exists Y.\ f(X) \land \left(\bigwedge_{y\in Y} y \Leftrightarrow g_y(Z)\right)$.
This work has been implemented using BDDs~\cite{bryant86} to perform
all
symbolic operations on Boolean functions.  BDD packages provide optimized
procedures to do substitution (see e.g. function \emph{compose}
in~\cite{somenzi99}).

\paragraph*{Circuit specifications}
We are interested in synthesizing controllers for synchronous sequential
circuits enforcing a given safety specification, where some inputs are
\emph{controllable}, and others are \emph{uncontrollable}. Intuitively,
controllable inputs are to be determined by the controller to be synthesized,
while uncontrollable inputs cannot be restricted, and are determined by the
environment. 
A distinguished latch indicates if an error has
occurred. Formally, a \emph{synchronous sequential circuit} is a tuple $\langle
X_u, X_c, L, (f_l)_{l\in L}, \fbad \rangle$, where:
\begin{itemize}
	\item $X_u, X_c, L$ are finite sets of boolean variables representing
		\emph{uncontrollable inputs}, \emph{controllable inputs}, and
		\emph{latches} respectively;
	\item for each latch $l\in L$, $f_l \colon \mathbb{B}^{X_u} \times
		\mathbb{B}^{X_c} \times \mathbb{B}^L \to \mathbb{B}$ is the
		\emph{transition function} that gives the valuation of $l$ in
		the next step; 
	\item $\fbad$ is the \emph{error function} $\fbad \colon
		\mathbb{B}^{X_u} \times \mathbb{B}^{X_c} \times \mathbb{B}^L \to
		\mathbb{B}$, which evaluates to true in error states. 
\end{itemize}

Given a circuit, our goal is to synthesize a \emph{controller} which, given any
valuation of the latches and uncontrollable inputs, sets the controllable
inputs, in order to ensure that the overall system never enters an error state.

We assume that
\begin{inparaenum}[(i)]
\item there is a latch $\out \in L$ which, once it becomes true, stays true, and
\item that the latches are initialized to $0$, i.e.
the initial valuation is $v(l) = 0$ for all~$l$.
\end{inparaenum}

\paragraph*{Reachability and Safety Games}
The problem of controller synthesis can be formalized as a game between two
players, namely, \adam and \eve, played on a graph~(see, e.g.~\cite{thomas95}).
Formally, an \emph{arena} is a tuple $G = \langle Q, q_I, \Sigma_u, \Sigma_c,
\Delta \rangle$ where:
\begin{inparaenum}[(i)]
	\item $Q$ is a finite set of states;
	\item $q_I \in Q$ is the initial state;
	\item $\Sigma_u$ is a finite set of uncontrollable actions;
	\item $\Sigma_c$ a finite set of controllable actions;
	\item $\Delta \subseteq Q \times \Sigma_u \times \Sigma_c \times Q$ is a
		transition relation.
\end{inparaenum}

The game is initially in state~$q_I$ and is played in rounds. At every round,
from state $q$, \adam chooses an action~$a_u$ from $\Sigma_u$ and \eve responds
by choosing an action~$a_c$ from $\Sigma_c$ and a successor state $s \in Q$ such
that $(q,a_u,a_c,s) \in \Delta$. We write $\delta$ instead of $\Delta$ if the
transition relation is functional.

A \emph{play} in such a game consists of an infinite sequence of states, i.e.
$q_0 q_1 \ldots \in Q^\omega$, where $q_0 = q_I$. For a play $\pi = q_0 q_1
\ldots$, we denote by $\pi[n]$ its $(n+1)$-th state, i.e $q_{n}$. A
\emph{strategy of \adam} is a function $\stratadam \colon Q^* \to \Sigma_u$
which given a sequence of states, chooses an uncontrollable action. A
\emph{strategy of \eve} is a function $\strateve \colon Q^* \times \Sigma_u \to
\Sigma_c$ which given a sequence of states and an uncontrollable action, choses
a controllable action. For $\pi = q_0 q_1 \ldots q_n \in Q^*$, we denote by
$\last(\pi)$ the last state from $\pi$, i.e. $q_n$.  We say $\stratadam$ is a
\emph{memoryless strategy of \adam} if for any $\pi,\pi' \in Q^*$ then
$\last(\pi) = \last(\pi')$ implies $\stratadam(\pi) = \stratadam(\pi')$.
Similarly, $\strateve$ is a \emph{memoryless strategy of \eve} if for any
$\pi,\pi' \in Q^*, a_u \in \Sigma_u$, then $\last(\pi) = \last(\pi')$ implies
$\strateve(\pi,a_u) = \strateve(\pi',a_u)$.

A play $\pi$ is \emph{consistent} with a pair of strategies
$(\stratadam,\strateve)$ if for all $i \ge 0$:
\[ 
	\pi[i+1] = \delta(\pi[i], \stratadam(\pi[i]),
	                          \strateve(\pi[i], \stratadam(\pi[i]))).
\]
Given a strategy~$\strateve$ of \eve, we write $\plays(G,\strateve)$ the set of
plays that are consistent with $(\stratadam, \strateve)$ for some $\stratadam$.

A \emph{safety game} is a pair $\langle G, \mathcal{U} \rangle$ where
$\mathcal{U} \subseteq Q$ is a set of \emph{unsafe} states.  The objective of
\eve is to keep the play within the states $Q \setminus \mathcal{U}$ at all
times. We say that $\strateve$ is \emph{winning for \eve} if for any play $\pi
\in \plays(G, \strateve)$, for all $n \ge 0$, $\pi[n] \not \in \mathcal{U}$.
Otherwise, $\pi$ is \emph{winning for \adam}, and we denote by $i_\pi$ the first
turn in which a state in $\mathcal{U}$ is visited, that is $i_\pi = \min\{ i \ge
0 \st \pi[i] \in \mathcal{U}\}$.
Note that in safety games, the objective of \adam is to
\emph{reach}~$\mathcal{U}$. From the point of view of \adam, these are in fact 
\emph{reachability games}. 

In this work we study finite safety and reachability games for which it is known that
\emph{memoryless strategies} suffice for either player (see,
e.g.~\cite{gradel04}). Thus in what follows, when we speak about strategies, we
mean memoryless strategies and we take strategies for \adam and \eve to be of
the form $\stratadam : Q \to \Sigma_u$ and $\strateve : Q \times \Sigma_u \to
\Sigma_c$, respectively.

\paragraph*{Safety Games For Circuits}
We formalize the controller synthesis problem for circuits as safety games.
Given a circuit specification $\langle X_u, X_c, L, (f_l)_{l\in L}, \fbad
\rangle$, we define the game $\langle G, \mathcal{U} \rangle$ with $G=\langle
Q,q_I, \Sigma_u, \Sigma_c, \delta \rangle$, where $Q = \mathbb{B}^L$, $q_I =
0^L$ (i.e. the valuation that assings~$0$ to all~$L$),
$\Sigma_u = \mathbb{B}^{X_u}$, and $\Sigma_c = \mathbb{B}^{X_c}$. So
states (resp. actions) in~$G$ are valuations on latches (resp. inputs).
Let $q,s$ be valuations on latches. We define accordingly the transition
function as $\delta(q,\sigma_u,\sigma_c) \mapsto s$ if $s(l) =
f_l(q,\sigma_u,\sigma_c)$ for all $l \in L$.

\section{Realizability}

\paragraph*{Basic Fixpoint Algorithm}
We recall the basic fixpoint computation for solving safety games, applied here
on safety games for circuits.  Let $C = \langle X_u, X_c, L, (f_l)_{l\in
L},\fbad \rangle$ be a circuit specification and $G_C = \langle Q, q_I,
\Sigma_u, \Sigma_c, \delta, \mathcal{U} \rangle$ the associated safety game.
The set of the states from which there is no \eve's strategy to ensure the
safety objective can be computed by iterating an \emph{uncontrollable
predecessors} operator.  For any $S \subseteq Q$, the \emph{uncontrollable
predecessors} of~$S$ is defined as
\[
	\upre(S) = \{q \in Q \st \exists \sigma_u \in \Sigma_u.\ \forall \sigma_c \in
		\Sigma_c :
		\delta(q, \sigma_u, \sigma_c) \in S\}.
\]
We denote by $\upre^*(S) = \mu X. (S \cup \upre(X))$, the \emph{least fixpoint}
of the function $F : X \to S \cup \upre(X)$ in the $\mu$-calculus notation (see
\cite{ej91}). Note that $F$ is defined on the powerset lattice, which is finite.
It follows from Tarski-Knaster theorem~\cite{tarski55} that, because $F$ is
monotonic, the fixpoint exists and can be computed by iterating the application
of $F$ starting from the least value of the lattice, i.e. $\emptyset$.

The following is a well-known result about the relationship between safety games
and the $\upre$ operator. The second part of the claim follows from the
determinacy of finite safety games~\cite{gradel04}.
\begin{proposition}
\label{pro:win-lose-safe}
	Let $C$ be a circuit specification and $G_C$ the associated safety game. Then
	\begin{inparaenum}[(i)]
		\item \adam has a winning strategy in $G_C$ if and only if $q_I
			\in \upre^*(\mathcal{U})$; and
		\item \eve has a winning strategy in $G_C$ if and only if $q_I
			\not\in \upre^*(\mathcal{U})$.
	\end{inparaenum}
\end{proposition}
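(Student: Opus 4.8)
The plan is to prove both parts by establishing the key invariant that $\upre^*(\mathcal{U})$ is exactly the set of states from which \adam can force a visit to $\mathcal{U}$. Since the two parts are logically complementary (given determinacy), I would focus on proving part (i) carefully and then derive part (ii).

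For part (i), I would first show the \emph{soundness} direction: if $q_I \in \upre^*(\mathcal{U})$, then \adam has a winning strategy. The natural approach is to use the iterative characterization of the least fixpoint guaranteed by Tarski--Knaster. Define the \emph{rank} of a state $q \in \upre^*(\mathcal{U})$ as the least index $n$ such that $q \in F^n(\varnothing)$, where $F(X) = \mathcal{U} \cup \upre(X)$. I would prove by induction on the rank that from any state of rank $n$, \adam can force the play into $\mathcal{U}$ within at most $n$ steps. The base case $n=0$ is vacuous (rank $1$ corresponds to $\mathcal{U}$ itself); for a state $q$ of rank $n+1$ that is not already in $\mathcal{U}$, membership in $\upre(F^n(\varnothing))$ provides an uncontrollable action $\sigma_u$ such that for \emph{every} controllable response $\sigma_c$, the successor $\delta(q,\sigma_u,\sigma_c)$ has rank at most $n$. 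Defining \adam's memoryless strategy $\stratadam(q) = \sigma_u$ on such states and invoking the induction hypothesis yields the claim. Since $q_I$ has finite rank, \adam wins.

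For the \emph{completeness} direction of part (i), I would prove the contrapositive: if $q_I \notin \upre^*(\mathcal{U})$, then \eve has a winning strategy, which simultaneously establishes the ``if'' direction of part (ii). The key observation is that the complement $W = Q \setminus \upre^*(\mathcal{U})$ is a \emph{trap} for \adam in the sense that from any $q \in W$ and \emph{any} uncontrollable action $\sigma_u$, there exists a controllable response $\sigma_c$ with $\delta(q,\sigma_u,\sigma_c) \in W$. This follows directly from the fixpoint property $\upre^*(\mathcal{U}) = F(\upre^*(\mathcal{U}))$: if $q \in W$ then $q \notin \upre(\upre^*(\mathcal{U}))$, so for every $\sigma_u$ there is some $\sigma_c$ sending the successor outside $\upre^*(\mathcal{U})$, i.e.\ into $W$. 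This defines a memoryless strategy $\strateve$ for \eve that keeps every consistent play inside $W$; since $W \cap \mathcal{U} = \varnothing$ (as $\mathcal{U} \subseteq \upre^*(\mathcal{U})$) and $q_I \in W$, no consistent play ever visits $\mathcal{U}$, so $\strateve$ is winning for \eve.

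The main obstacle, such as it is, lies in the bookkeeping of the rank-based induction in the soundness argument: one must check that the memoryless strategy built state-by-state is well defined and that consistency of the play with $\stratadam$ genuinely forces the rank to strictly decrease along the play until $\mathcal{U}$ is reached. Once both directions of part (i) are in hand, part (ii) is immediate: the ``only if'' direction of (ii) is the contrapositive of soundness in (i), and the ``if'' direction of (ii) was established as the completeness argument above. The two results together also re-derive the determinacy of $G_C$ cited after the statement, since exactly one of $q_I \in \upre^*(\mathcal{U})$ and $q_I \notin \upre^*(\mathcal{U})$ holds.
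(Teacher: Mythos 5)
Your proof is correct, but note that the paper does not actually prove this proposition at all: it invokes it as a well-known result, citing~\cite{gradel04,thomas95}, and only remarks that part (ii) follows from the determinacy of finite safety games. Your proposal is the standard self-contained attractor argument that those references contain: the rank induction on the iterates $F^n(\varnothing)$ gives a memoryless winning strategy for \adam on $\upre^*(\mathcal{U})$, and the trap property of the complement (from the fixpoint equation $\upre^*(\mathcal{U}) = \mathcal{U} \cup \upre(\upre^*(\mathcal{U}))$) gives a memoryless winning strategy for \eve outside it. What your route buys over the paper's citation is that determinacy is a \emph{consequence} rather than an input, and the two constructions you give are precisely what underlie the rest of the paper: the iterates $F^n(\varnothing)$ are what the symbolic algorithm computes, and your trap-based strategy for \eve is essentially the quasi-strategy $\lambda$ extracted from the winning region in Section 5. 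One small step you gloss over: deriving the ``only if'' directions as contrapositives requires knowing that \adam and \eve cannot both have winning strategies; this is immediate here because $\delta$ is functional, so a fixed pair $(\stratadam,\strateve)$ determines a unique play, which either visits $\mathcal{U}$ or not, but it deserves a sentence. Also watch the off-by-one in your rank convention (rank $1$ is $\mathcal{U}$, reached in $0$ steps); as written the ``within at most $n$ steps'' bound is still true, just slack. Neither point is a gap.
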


\paragraph*{Symbolic implementation of $\upre$}
There is a plethora of symbolic algorithms to do forward and backward state
space exploration in large systems defined succinctly,
e.g.~\cite{cbm90,cmb91,bcl91} to mention a few classic works on the topic.  The
construction of a symbolic (monolithic or partitioned) transition relation is
the first step of those algorithms. For deterministic systems, where the
transition relation is functional, a \emph{transition function vector} can be
used to represent the transitions (that is, one distinct function for each
latch).  This is known to improve the performance of state space exploration
algorithms in some systems, although this is not the case systematically;
see~\cite{cbm90,cmb91}.

We consider both monolithic and partitioned transition relations in this work.
We present 1) a version of the operators using the monolithic transition
relation $T(L,X_u,X_c,L')$ constructed once at the beginning of the algorithm,
and 2) an alternative version using only the partitioned transition relation.
Our results also confirm that the preference between the two depends on the type
of circuit (see Section~\ref{sec:experiments-discussion}).

More precisely, the monolithic transition relation is defined as
$T(L,X_u,X_c,L') = \bigwedge_{l \in L} l' \Leftrightarrow f_l(X_u,X_c,L)$, where
$L'$ represents the next step states.  $\upre(S)$ can then be computed symbolically by
the formula
\[
	\upre(S) = \exists X_u.\ \forall X_c.\ \exists L': T(L,X_u,X_c,L') \land
	S(L').
\]
Alternatively, we observe that since we have that $\forall l' \in L' : l'
\Leftrightarrow f_l(X_u,X_c,L)$ we can directly substitute all $l'$ into
$S(L')$ to obtain the desired set without using $T(L,X_u,X_c,L')$, i.e.
\[
	\upre(S) = \exists X_u.\ \forall X_c : S(L')[l' \leftarrow
	f_l(X_u,X_c,L)]_{l \in L}.
\]

\section{Abstractions of Safety Games}
\subsection{Conservative Abstractions}
Computing the fixpoint of~$\upre$ in safety games for circuits may be infeasible
due to their large state spaces. For such circuits, we consider
\emph{abstractions}, which are games with smaller state spaces on which fixpoint
computations are feasible. We follow the \emph{abstract interpretation}
framework~\cite{cc77} to build \emph{conservative abstractions}, so as to make sure
that if the abstract game can be won by \eve, then the concrete game 
can also be won by her.

Let $C$ be a circuit specification and $G_C=\langle Q, q_I, \Sigma_u, \Sigma_c,
\delta, \mathcal{U} \rangle$ its associated safety game.  Intuitively,
abstractions will be obtained by partitioning the state space of~$G_C$ and
defining transitions between the elements of the partition.  Formally, a game
$G_C^\abs = \langle Q^\abs, q_I^\abs, \Sigma_u, \Sigma_c, \Delta^\abs,
\mathcal{U}^\abs \rangle$ is a \emph{conservative abstraction of $G_C$} if 
\begin{itemize}
	\item $Q^\abs$ is a partition of $Q$;
	\item $q_I^\abs = \{q_I\} \in Q^\abs$;
	\item $(s, \sigma_u, \sigma_c, s') \in \Delta^\abs$ if $\exists q \in
		s.\ \exists q' \in s' : \delta(q, \sigma_u, \sigma_c) = q'$; and
	\item $\mathcal{U} \subseteq \bigcup_{u \in \mathcal{U}^\abs} u$.
\end{itemize}

Notice that we require the abstractions to distinguish the initial states, and
the abstract safety specification $\calU^\abs$ to cover~$\calU$.  Conservative
abstractions give more power to \adam~\cite{hmmr00}. We will show that
if \eve wins in a conservative abstraction, then it wins in the original
game.
We will refer to the states of~$G_C$ as \emph{concrete states}, and those
of~$G^\abs$ as \emph{abstract states}.

We define the \emph{concretization function} $\gamma : \pow(Q^\abs) \to
\pow(Q)$ for this abstraction, defined by 
\(
	\gamma(S^\abs) = \bigcup_{s^\abs \in S^\abs} s^\abs,
\)
which gives the set of concrete states covered by a given set of abstract
states.  The dual operation is abstraction; we define two \emph{abstraction
functions} $\overline{\alpha},
\underline{\alpha} : \pow(Q) \to \pow(Q^\abs)$ as follows:
\(
	\overline{\alpha}(S) = \{ q^\abs \in Q^\abs \st S \cap q^\abs \neq
	\emptyset\},
	\) and
	\(
	\underline{\alpha}(S) = \{ q^\abs \in Q^\abs \st S \subseteq q^\abs\}.
	\)
Intuitively, $\overline{\alpha}(S)$ is the smallest set of abstract states
covering~$S$; while $\underline{\alpha}(S)$ is the largest set of abstract
states entirely included in~$S$. The pair $(\overline{\alpha},\gamma)$ defines a
Galois connection:

\begin{lemma}[from~\cite{cc77}]
	\label{lem:galois-con}
	The pair $(\overline{\alpha},\gamma)$ is a Galois connection, that is,
	for all $s \subseteq Q$ and $t \subseteq Q^\abs$, we have that
	$\overline{\alpha}(s) \subseteq t$ if, and only if $s \subseteq
	\gamma(t)$.
\end{lemma}

The following lemma shows the relation between~$\overline{\alpha}$
and~$\underline{\alpha}$, which are, respectively, over- and
under-approximations of given sets.
\begin{lemma}
	\label{lem:under-conc-over}
	For any $S \subseteq Q$, we have $\gamma(\underline{\alpha}(S))
	\subseteq S \subseteq \gamma(\overline{\alpha}(S))$, and
	$\underline{\alpha}(S) = Q^\abs \setminus \overline{\alpha}(Q \setminus
	S)$.
\end{lemma}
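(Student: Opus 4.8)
The plan is to establish each of the three containments directly from the definitions of $\gamma$, $\overline{\alpha}$, and $\underline{\alpha}$, leaning throughout on the fact that $Q^\abs$ is a \emph{partition} of $Q$: every concrete state lies in exactly one abstract block. Recalling the intended reading that $\underline{\alpha}(S)$ collects precisely the blocks entirely contained in $S$, all three parts reduce to routine set manipulations once the relevant union is unfolded.

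For the first inclusion $\gamma(\underline{\alpha}(S)) \subseteq S$, I would observe that $\gamma(\underline{\alpha}(S)) = \bigcup \{q^\abs \st q^\abs \subseteq S\}$ is a union of blocks each of which is already a subset of $S$, so the union itself is contained in $S$. For the second inclusion $S \subseteq \gamma(\overline{\alpha}(S))$, I would take an arbitrary $x \in S$ and use the covering property of the partition: $x$ belongs to some block $q^\abs \in Q^\abs$, and since $x \in q^\abs \cap S$ this block meets $S$, hence $q^\abs \in \overline{\alpha}(S)$ and therefore $x \in \gamma(\overline{\alpha}(S))$.

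The duality identity $\underline{\alpha}(S) = Q^\abs \setminus \overline{\alpha}(Q \setminus S)$ is the part that best repays care. I would unfold the right-hand side: $\overline{\alpha}(Q \setminus S)$ consists of the blocks meeting the complement of $S$, so its complement in $Q^\abs$ consists of the blocks $q^\abs$ with $q^\abs \cap (Q \setminus S) = \emptyset$. The elementary equivalence $q^\abs \cap (Q \setminus S) = \emptyset \iff q^\abs \subseteq S$ then identifies this set with $\underline{\alpha}(S)$.

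I do not expect a genuine obstacle here, since each claim is a one-line consequence of the definitions; the only point demanding attention is to invoke the defining features of the partition correctly---covering of $Q$ for the second inclusion, and, for the duality identity, the fact that each block is itself a subset of $Q$, so that a block avoiding $Q \setminus S$ is exactly a block contained in $S$. A secondary caution is to keep the direction of containment in the definition of $\underline{\alpha}$ aligned with its stated intuition as the under-approximation, so that the three statements come out true rather than vacuous.
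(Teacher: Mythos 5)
Your proof is correct and takes the same route the paper implicitly intends: the lemma is stated there without proof, being exactly the direct unfolding of the definitions of $\gamma$, $\overline{\alpha}$, $\underline{\alpha}$ together with the covering property of the partition that you carry out. You were also right to read $\underline{\alpha}(S)$ as the set of blocks $q^\abs$ with $q^\abs \subseteq S$ (as the paper's stated intuition demands) rather than the reversed inclusion $S \subseteq q^\abs$ that appears in the paper's displayed definition, since under that literal reading the duality identity $\underline{\alpha}(S) = Q^\abs \setminus \overline{\alpha}(Q \setminus S)$ would fail, e.g.\ for $S$ a union of two blocks.
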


\subsection{Predicate Abstraction and Localization Reduction}
In order to effectively construct abstractions from the concrete circuit safety
game we use \emph{predicate abstraction} \cite{gs97} and \emph{localization
reduction} \cite{kurshan94}.  Predicate abstraction consists in defining
abstractions by partitioning the state space by predicates, and is used e.g. in
CEGAR methods~\cite{cgjlv00}.  Localization reduction is a special case of
predicate abstraction in which predicates consist of single latches.

Consider any circuit specification $C = \langle X_u, X_c, L, (f_l)_{l\in L},
\fbad \rangle$, and the associated safety game $G = \langle Q, q_I,
\Sigma_u, \Sigma_c, \delta, \calU \rangle$.
Let $P$ be a set of boolean variables, also called \emph{predicates},
and $(f_p(L))_{p \in P}$ be a set of formulas.
We assume that there exist $p_I,p_U \in P$ such that~$f_{p_I} \equiv q_I$,
and~$f_{p_U} \equiv \calU$.
The predicates~$P$ partition the state space $Q$, i.e.
$\mathbb{B}^L = \biguplus_{v \in \mathbb{B}^P} \bigcap_{p \in P} f_p^{-1}(v(p))$.
We will consider the conservative abstraction defined on this partition.

Formally, we consider the state space~$Q^\abs = \mathbb{B}^P$.
Given $S^\abs \subseteq Q^\abs$, the concretization function is
\[
\gamma(S^\abs)(L) = S^\abs(P)[p \leftarrow f_p(L)]_{p \in P}.
\]
The abstraction functions are defined accordingly:
\begin{align*}
	\overline{\alpha}(S)(P) &= \exists L: S(L) \land (\bigwedge_{p \in P} p
	\Leftrightarrow f_p(L)),&\\
	\underline{\alpha}(S)(P) &= \exists L: \lnot (\lnot S(L) \land (\bigwedge_{p
	\in P} p \Leftrightarrow f_p(L))).& \text{from
	Lemma~\ref{lem:under-conc-over}}\\
\end{align*}

The transition relation $\Delta^\abs$ is given by $(q^\abs,
\sigma_u, \sigma_c, r^\abs) \in \Delta^\abs \Leftrightarrow
q^\abs,\sigma_u,\sigma_c,r^\abs \models T^\abs$, where
\begin{align*}
	T^\abs(P,X_u,X_c,P') &= \exists L,L': T(L,X_u,X_c,L') 
	\land (\bigwedge_{p \in P} p \Leftrightarrow (f_p(L)) 
	\land (\bigwedge_{p' \in P'} p' \Leftrightarrow (f_{p'}(L')).
\end{align*}

\subsection{Abstract uncontrollable predecessors}
\label{sec:abs-upre}
We now define the uncontrollable predecessors operators in the abstract games.
We define two operators, one yielding an over-approximation of the
usual~$\upre$, and another one yielding an under-approximation.
We let
\begin{align*}
	\upreover(S^\abs) &= \{q^\abs \st \exists \sigma_u.\ \forall \sigma_c.\
	\exists r^\abs : (q^\abs,\sigma_u,\sigma_c,r^\abs) \in \Delta^\abs
	\text{ and } r^\abs \in S^\abs \},\\
	\upreunder(S^\abs) &= \{q^\abs \st \exists \sigma_u.\ \forall \sigma_c.\
	\forall r^a : (q^\abs,\sigma_u,\sigma_c,r^\abs) \in \Delta^\abs
	\text{ implies } r^\abs \in S^\abs \}.
\end{align*}

Given a formula $S^\abs(P)$, representing a set of states from $Q^\abs$, the
operators can be easily computed symbolically as follows:
\begin{align*}
	\upreover(S^\abs) &= \exists X_u.\ \forall X_c.\ \exists P' :
		T^\abs(P,X_u,X_c,P') \land S^\abs(P'),\\
	\upreunder(S^\abs) &= \exists X_u.\ \forall X_c.\ \forall P' :
		T^\abs(P,X_u,X_c,P') \Rightarrow S^\abs(P').
\end{align*}

The following lemma shows the relation between the abstract uncontrollable
predecessor operator and the concrete one.
\begin{lemma}\label{pro:abs-vals}
	For any set $S^\abs \subseteq Q^\abs$,
	\(
		\gamma(\upreunder(S^\abs)) \subseteq \upre(\gamma(S^\abs))
		\subseteq \gamma(\upreover(S^\abs)).
		\)
\end{lemma}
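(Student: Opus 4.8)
The plan is to prove the two inclusions separately, working with the set-theoretic definitions of $\upreover$ and $\upreunder$ and unfolding the concretization $\gamma$, then to appeal to the correspondence between abstract and concrete transitions encoded in $\Delta^\abs$. The central fact I will use repeatedly is the defining property of a conservative abstraction: $(s,\sigma_u,\sigma_c,s') \in \Delta^\abs$ holds whenever some concrete $q \in s$ has a concrete successor $\delta(q,\sigma_u,\sigma_c) = q'$ with $q' \in s'$. Since $Q^\abs$ is a partition, each concrete state~$q$ lies in exactly one abstract state, which I will denote by $[q]$; this gives a clean bridge between quantifying over concrete states and over the abstract states containing them.

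For the right inclusion $\upre(\gamma(S^\abs)) \subseteq \gamma(\upreover(S^\abs))$, I would take a concrete state $q \in \upre(\gamma(S^\abs))$, so there is some $\sigma_u$ with $\delta(q,\sigma_u,\sigma_c) \in \gamma(S^\abs)$ for every $\sigma_c$. I want to show $[q] \in \upreover(S^\abs)$, i.e.\ that the same $\sigma_u$ witnesses membership. Fix any $\sigma_c$; the concrete successor $q' = \delta(q,\sigma_u,\sigma_c)$ lies in $\gamma(S^\abs)$, which means its abstract class $[q']$ belongs to $S^\abs$. Because $q \in [q]$, $q' \in [q']$, and $\delta(q,\sigma_u,\sigma_c) = q'$, the definition of $\Delta^\abs$ yields the abstract transition $([q],\sigma_u,\sigma_c,[q']) \in \Delta^\abs$ with $[q'] \in S^\abs$, which is exactly the existential witness $r^\abs = [q']$ required by $\upreover$. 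Since $\sigma_c$ was arbitrary, $[q] \in \upreover(S^\abs)$, and then $q \in [q] \subseteq \gamma(\upreover(S^\abs))$.

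For the left inclusion $\gamma(\upreunder(S^\abs)) \subseteq \upre(\gamma(S^\abs))$, I would take a concrete state $q \in \gamma(\upreunder(S^\abs))$, so $[q] \in \upreunder(S^\abs)$: there is a $\sigma_u$ such that for all $\sigma_c$ every abstract successor of $[q]$ lies in $S^\abs$. I claim this $\sigma_u$ also works concretely. Fix $\sigma_c$ and let $q' = \delta(q,\sigma_u,\sigma_c)$ be the concrete successor. By the definition of $\Delta^\abs$ the transition $([q],\sigma_u,\sigma_c,[q'])$ is in $\Delta^\abs$ (again using $q \in [q]$ and $q' \in [q']$), so the universal quantifier in $\upreunder$ forces $[q'] \in S^\abs$. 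This means $q' \in \gamma(S^\abs)$ by definition of $\gamma$. Since $\sigma_c$ was arbitrary, $q \in \upre(\gamma(S^\abs))$.

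The step I expect to require the most care is the \textbf{direction of the quantifier over $r^\abs$} in the under-approximation. In the over-approximation an \emph{existential} successor suffices, and I simply exhibit the class of the concrete successor; but in the under-approximation I rely on the fact that the concrete successor's class is \emph{among} the abstract successors forced into $S^\abs$ by the universal quantifier. The soundness of this hinges precisely on the conservative-abstraction clause guaranteeing that the genuine concrete transition is always reflected as an abstract transition, so that the universal constraint in $\upreunder$ indeed applies to it; I would make sure to invoke that clause explicitly rather than assuming the abstract transition relation captures \emph{only} real transitions. No subtlety involving the action alphabets arises, since $\Sigma_u$ and $\Sigma_c$ are shared between the concrete and abstract games.
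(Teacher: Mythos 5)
Your proof is correct and takes essentially the same route as the paper's: both directions unfold the set-theoretic definitions, invoke the conservative-abstraction clause to turn the concrete transition $\delta(q,\sigma_u,\sigma_c)=q'$ into the abstract transition $([q],\sigma_u,\sigma_c,[q'])\in\Delta^\abs$ between the (unique, since $Q^\abs$ is a partition) classes of $q$ and $q'$, and handle the existential versus universal quantifier over $r^\abs$ exactly as the paper does. The only cosmetic difference is notation: your $[q]$ plays the role of the paper's $\overline{\alpha}(q)$ applied to a singleton.
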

\begin{proof}
	We show the inequalities hold from left to right.  Let $q^\abs \in
	\upreunder(S^\abs)$. There is $\sigma_u$ such that for all $\sigma_c$,
	for any state~$r^\abs$, $(q^\abs,\sigma_u,\sigma_c,r^\abs) \in
	\Delta^\abs$ implies $r^\abs \in S^\abs$. So by construction of
	$\Delta^\abs$, for any state $q$ in
	$\gamma(q^\abs)$, for all $\sigma_c$, $\delta(q,\sigma_u,\sigma_c) \in
	\gamma(S^\abs)$. Hence $q \in \upre(\gamma(S^\abs))$ and
	$\gamma(q^\abs) \subseteq \upre(\gamma(S^\abs))$. Therefore
	$\gamma(\upreunder(S^\abs)) \subseteq \upre(\gamma(S^\abs))$.
	
	Let $q \in \upre(\gamma(S^\abs))$, then there is $\sigma_u$ such that
	for all $\sigma_c$, $\delta(q,\sigma_u,\sigma_c) \in \gamma(S^\abs)$. By
	definition of $\Delta^\abs$, for all $\sigma_c$
	$(\overline{\alpha}(q),\sigma_u,\sigma_c,
	\overline{\alpha}(\delta(q,\sigma_u,\sigma_c))) \in \Delta^\abs$. Since
	$\overline{\alpha}(\delta(q, \sigma_u,\sigma_c)) \in S^\abs$ for all
	$\sigma_c$, we have $\overline{\alpha}(q) \in \upreover(S^\abs)$ and $q
	\in \gamma(\upreover(S^\abs))$. Hence $\upre(\gamma(S^\abs)) \subseteq
	\gamma(\upreover(S^\abs))$.

\end{proof}

Lemma~\ref{pro:abs-vals} implies, by induction, the following.

\begin{lemma}
\label{lem:overappU}
	$\gamma(\upreunder^*(\calU^\abs)) \subseteq \upre^*(\calU)) \subseteq
	\gamma(\upreover^*(\calU^\abs))$.
\end{lemma}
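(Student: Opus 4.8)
The statement to prove is the chain of inclusions
\[
	\gamma(\upreunder^*(\calU^\abs)) \subseteq \upre^*(\calU) \subseteq
	\gamma(\upreover^*(\calU^\abs)),
\]
where the starred operators denote least fixpoints of the respective monotone functions built from $\upreunder$, $\upre$, and $\upreover$ together with the corresponding unsafe set. The excerpt already hands me the single-step relation in Lemma~\ref{pro:abs-vals}, namely $\gamma(\upreunder(S^\abs)) \subseteq \upre(\gamma(S^\abs)) \subseteq \gamma(\upreover(S^\abs))$ for every $S^\abs$, and it tells me the result should follow by induction. So the plan is to lift the one-step commutation facts through the fixpoint iterations.

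The plan is to treat the two inclusions symmetrically, proving each by induction on the iterates of the least fixpoints. Write $U_0 = \varnothing$ and $U_{n+1} = \calU \cup \upre(U_n)$ for the concrete iteration, and $\underline{U}_0 = \varnothing$, $\underline{U}_{n+1} = \calU^\abs \cup \upreunder(\underline{U}_n)$ for the under-approximating one; by Tarski--Knaster these ascending chains converge to $\upre^*(\calU)$ and $\upreunder^*(\calU^\abs)$ respectively, and a finite lattice guarantees the iteration stabilizes. For the left inclusion I would establish by induction on $n$ that $\gamma(\underline{U}_n) \subseteq U_n$. The base case is immediate since $\gamma(\varnothing) = \varnothing$. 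For the inductive step I compute $\gamma(\underline{U}_{n+1}) = \gamma(\calU^\abs \cup \upreunder(\underline{U}_n))$; here I use that $\gamma$ distributes over unions (it is defined as a union over the abstract states in its argument, hence monotone and union-preserving), so this equals $\gamma(\calU^\abs) \cup \gamma(\upreunder(\underline{U}_n))$. The first term is contained in $\calU$ by the abstraction requirement $\calU \subseteq \bigcup_{u \in \calU^\abs} u = \gamma(\calU^\abs)$ --- note the inclusion goes the right way because $\gamma(\calU^\abs)$ is an over-approximation of $\calU$, so I must check the direction carefully. The second term is bounded using Lemma~\ref{pro:abs-vals} by $\upre(\gamma(\underline{U}_n))$, and then by monotonicity of $\upre$ together with the induction hypothesis $\gamma(\underline{U}_n) \subseteq U_n$ this is contained in $\upre(U_n)$. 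Combining, $\gamma(\underline{U}_{n+1}) \subseteq \calU \cup \upre(U_n) = U_{n+1}$, closing the induction. Passing to the limit gives $\gamma(\upreunder^*(\calU^\abs)) \subseteq \upre^*(\calU)$. The right inclusion is dual: setting $\overline{U}_{n+1} = \calU^\abs \cup \upreover(\overline{U}_n)$ I prove $U_n \subseteq \gamma(\overline{U}_n)$ by the same induction, using the $\calU \subseteq \gamma(\calU^\abs)$ covering condition for the base-case-free part and the right half of Lemma~\ref{pro:abs-vals} for the recursive part.

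The main obstacle I anticipate is bookkeeping the direction of the base/unsafe-set inclusions rather than any deep argument. Specifically, the covering condition $\calU \subseteq \gamma(\calU^\abs)$ only gives me one inequality, and I must make sure it is used on the correct side: it is exactly what I need for the over-approximation (right) inclusion, but for the under-approximation (left) inclusion I cannot assume $\gamma(\calU^\abs) \subseteq \calU$ in general. The clean way around this is to observe that in both fixpoint iterations the unsafe set appears only as a union term $\calU^\abs$, and that $\gamma(\underline{U}_n)$ and $U_n$ both already contain the contribution of $\calU$ after the first step, so the induction should be phrased so that the $\upreunder$ operator never needs the reverse covering inclusion --- only monotonicity of $\gamma$, $\upre$, and the two halves of Lemma~\ref{pro:abs-vals} are required. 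The one subtle point to verify is that $\gamma(\calU^\abs) \subseteq \calU$ does hold for the \emph{under}-side when we use that $\calU^\abs$ is chosen to be the set of abstract states whose concretization lies in $\calU$; I would state the precise assumption on $\calU^\abs$ being used (either the covering inclusion alone suffices for both directions, or an exactness assumption $\gamma(\calU^\abs)=\calU$ is implicitly in force) and route the proof accordingly, since getting this alignment right is the only place the argument could silently go wrong.
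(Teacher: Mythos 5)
Your plan is correct and is precisely the induction the paper leaves implicit: its entire proof of this lemma is the remark that Lemma~\ref{pro:abs-vals} ``implies, by induction'' the claim, and your iterate-wise argument (showing $\gamma(\underline{U}_n) \subseteq U_n$ and $U_n \subseteq \gamma(\overline{U}_n)$ via union-preservation of~$\gamma$, monotonicity of~$\upre$, and the two halves of Lemma~\ref{pro:abs-vals}) is exactly that induction spelled out. The subtlety you flag on the left inclusion is genuine and you point to the right resolution: the covering condition $\calU \subseteq \gamma(\calU^\abs)$ from the definition of conservative abstraction alone would \emph{not} suffice, but in the paper's predicate abstraction the assumption $f_{p_U} \equiv \calU$ forces $\gamma(\calU^\abs) = \calU$ exactly, and where the CEGAR algorithm later loses exactness it maintains precisely the invariant $\calU \cap \calR \subseteq \gamma(\calU^\abs) \subseteq \upre^*(\calU)$ (invariant~\eqref{invar:Ua}), which is the weakest hypothesis under which your induction still closes with $\upre^*(\calU)$ as target.
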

This yields the following Theorem.
\begin{theorem}
\label{thm:conservative-abstraction}
	Let~$C$ be a circuit specification, $G_C$ be its associated safety game
	and $G_C^\abs$ a conservative abstraction of it. If \eve wins~$G^a_C$
	then she also wins~$G_C$.
\end{theorem}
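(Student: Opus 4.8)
The plan is to connect the abstract winning condition to the concrete one via the inclusion established in Lemma~\ref{lem:overappU}. By Proposition~\ref{pro:win-lose-safe} applied to the abstract game~$G_C^\abs$, the statement that \eve wins~$G_C^\abs$ is equivalent to $q_I^\abs \notin \upreover^*(\calU^\abs)$, \emph{provided} the abstract fixpoint~$\upreover^*$ is the operator that determines \adam's winning region in the abstraction. Here the key design choice matters: since a conservative abstraction gives more power to \adam, the relevant operator on the abstract side is the over-approximating one~$\upreover$, whose fixpoint over-approximates the concrete attractor. So first I would state that \eve wins~$G_C^\abs$ means $q_I^\abs \notin \upreover^*(\calU^\abs)$.

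Next I would invoke Lemma~\ref{lem:overappU}, which gives $\upre^*(\calU) \subseteq \gamma(\upreover^*(\calU^\abs))$. The goal is to deduce $q_I \notin \upre^*(\calU)$ from $q_I^\abs \notin \upreover^*(\calU^\abs)$, and then apply the second part of Proposition~\ref{pro:win-lose-safe} to the concrete game to conclude that \eve wins~$G_C$. The bridge between the two ``initial state'' conditions is the requirement, built into the definition of a conservative abstraction, that $q_I^\abs = \{q_I\}$ is an abstract state that isolates the concrete initial state. Concretely, if $q_I \in \upre^*(\calU)$, then by the inclusion $q_I \in \gamma(\upreover^*(\calU^\abs))$, so there is some abstract state $s^\abs \in \upreover^*(\calU^\abs)$ with $q_I \in s^\abs = \gamma(\{s^\abs\})$; since the partition isolates~$q_I$ in the singleton block~$\{q_I\} = q_I^\abs$, this forces $s^\abs = q_I^\abs$, whence $q_I^\abs \in \upreover^*(\calU^\abs)$.

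Assembling the contrapositive chain gives the result: assuming \eve wins~$G_C^\abs$, we have $q_I^\abs \notin \upreover^*(\calU^\abs)$; by the argument above this yields $q_I \notin \upre^*(\calU)$; and Proposition~\ref{pro:win-lose-safe}(ii) then certifies that \eve wins~$G_C$. The main obstacle I anticipate is not the fixpoint inclusion itself—that is handed to us by Lemma~\ref{lem:overappU}—but rather carefully justifying the last implication about the initial state, namely that membership of the \emph{concretized} abstract fixpoint can be pulled back to membership of the abstract initial state~$q_I^\abs$. This step relies essentially on the partition being a genuine partition (so that each concrete state lies in exactly one block) together with the isolation condition~$q_I^\abs = \{q_I\}$; without the latter, a concrete~$q_I$ could be covered by a block strictly larger than~$\{q_I\}$ and the pullback would fail. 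I would therefore take care to spell out that the singleton block is the unique block containing~$q_I$.
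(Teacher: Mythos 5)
Your proposal is correct and follows exactly the route the paper intends: the paper gives no explicit proof (it states that Lemma~\ref{lem:overappU} ``yields'' the theorem), and the intended derivation is precisely your chain---read ``\eve wins $G_C^\abs$'' as $q_I^\abs \notin \upreover^*(\calU^\abs)$ since the conservative abstraction hands the nondeterminism of $\Delta^\abs$ to \adam, use the inclusion $\upre^*(\calU) \subseteq \gamma(\upreover^*(\calU^\abs))$, and pull membership back to $q_I^\abs$ via the partition condition $q_I^\abs = \{q_I\}$ before applying Proposition~\ref{pro:win-lose-safe}(ii). Your explicit attention to the initial-state pullback (which fails without the singleton block $\{q_I\}$) is exactly the detail the definition of conservative abstraction was designed to supply.
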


\subsection{Optimizations}
\label{sec:optimized-ops}

Because we consider conservative abstractions, if \adam wins in the abstract
game, one cannot conclude unrealizability right away. However, we can still use
the information gathered during the computation of the abstract uncontrollable
predecessors $\upreover$. In fact, we will show that the states and actions that
witness the uncontrollable predecessors in each iteration of~$\upreover$ define
a \emph{set} of strategies which contains any concrete winning strategy for
\adam (Prop.~\ref{pro:all-unsafe-strats} below).  We then use this information
to restrict future $\upre$ operations to these strategies.

\paragraph{Quasi-strategies of \adam}
Formally, a \emph{quasi-strategy of \adam} in the conservative abstraction
$G^\abs$ of a game $G$ is a function $\qstratadam : Q^\abs \to \pow(\Sigma_u)$
which maps any abstract state to a \emph{set} of uncontrollable actions. Thus, a
\emph{quasi-strategy} can be seen as a non-deterministic strategy defined on a
subset of states (in fact, $\qstratadam$ can map some states to the empty set).
We denote by $\cqstratadam$ the concrete quasi-strategy in~$G$ given by
\(
	\cqstratadam(q) \mapsto \qstratadam(\overline{\alpha}(q))
\)
for any~$q \in Q$.

Let $S^\abs \subseteq Q^\abs$, and $W = \upreover^*(S^\abs)$. The set $W$
describes a quasi-strategy $\qstratadam_W$, defined by
\(
	\qstratadam_W(q) \mapsto \{\sigma_u \in \Sigma_u \st \forall \sigma_c.\
		\exists s \in W : (q, \sigma_u, \sigma_c, s) \in
		\Delta^\abs \}, 
\)
for any~$q \in Q^\abs$. This quasi-strategy corresponds to the set of
uncontrollable actions \adam can play from states in $W$ to stay within $W$.
Note that not all strategies respecting~$\qstratadam_W$ are winning for \adam; although
all winning strategies for \adam choose actions prescribed by
$\qstratadam_W$.

\begin{proposition}
	\label{pro:all-unsafe-strats}
	Let $G^\abs$ be a conservative abstraction of a game $G$, $\qstratadam$
	be the quasi-strategy for \adam defined by
	$\upreover^*(\mathcal{U}^\abs)$ and $\stratadam$ a strategy for \adam in
	$G$. If $\stratadam$ is a winning strategy for \adam in $G$, then
	$\forall \pi \in \plays(G, \stratadam).\ \forall i < i_\pi:
	\stratadam(\pi[i]) \in \gamma(\qstratadam)(\pi[i])$.
\end{proposition}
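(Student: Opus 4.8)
The plan is to argue by contradiction, reducing the statement to the determinacy of the concrete game (Proposition~\ref{pro:win-lose-safe}) together with the over-approximation guarantee of Lemma~\ref{lem:overappU}. Write $W = \upreover^*(\mathcal{U}^\abs)$, so that the quasi-strategy in question is $\qstratadam = \qstratadam_W$, and recall that $\gamma(\qstratadam)(q)$ is shorthand for the concrete quasi-strategy $\qstratadam(\overline{\alpha}(q))$, where $\overline{\alpha}(q)$ denotes the unique block of the partition $Q^\abs$ containing $q$. Suppose the conclusion fails: there is a play $\pi \in \plays(G, \stratadam)$ and an index $i < i_\pi$ with $\stratadam(\pi[i]) \notin \qstratadam(\overline{\alpha}(\pi[i]))$. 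Set $q = \pi[i]$, $q^\abs = \overline{\alpha}(q)$ and $\sigma_u = \stratadam(q)$.

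First I would unfold the definition of $\qstratadam_W$. Since $\sigma_u \notin \qstratadam_W(q^\abs)$, negating the inner $\forall \sigma_c \exists s$ condition yields a controllable action $\sigma_c$ such that no $s \in W$ satisfies $(q^\abs, \sigma_u, \sigma_c, s) \in \Delta^\abs$. Let $q' = \delta(q, \sigma_u, \sigma_c)$ be the corresponding concrete successor. By the defining clause of $\Delta^\abs$ for conservative abstractions --- using $q \in q^\abs$, $q' \in \overline{\alpha}(q')$ and $\delta(q,\sigma_u,\sigma_c)=q'$ --- the abstract transition $(q^\abs, \sigma_u, \sigma_c, \overline{\alpha}(q')) \in \Delta^\abs$ holds. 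Combining this with the previous sentence forces $\overline{\alpha}(q') \notin W$, hence $q' \notin \gamma(W)$. Now Lemma~\ref{lem:overappU} gives $\upre^*(\mathcal{U}) \subseteq \gamma(W)$, so $q' \notin \upre^*(\mathcal{U})$; in particular $q' \notin \mathcal{U}$, and by Proposition~\ref{pro:win-lose-safe}(ii) \eve has a memoryless winning strategy $\strateve'$ for the safety game from $q'$.

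Finally I would build a concrete play witnessing the contradiction. Follow $\pi$ for the first $i$ rounds to reach $q = \pi[i]$; since $i < i_\pi$, none of $\pi[0],\dots,\pi[i]$ lies in $\mathcal{U}$. At $q$ let \adam play $\sigma_u = \stratadam(q)$ (as he must) and let \eve play the deviating action $\sigma_c$, reaching $q'$ at round $i+1$; from $q'$ onwards let \eve follow $\strateve'$ while \adam keeps playing $\stratadam$. Because $\stratadam$ is memoryless, this is a legal play consistent with $\stratadam$, i.e. an element of $\plays(G,\stratadam)$; and since $q' \notin \mathcal{U}$ and $\strateve'$ keeps the play out of $\mathcal{U}$ forever, the whole play never visits $\mathcal{U}$. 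This contradicts the hypothesis that $\stratadam$ is winning for \adam, completing the argument.

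I expect the main obstacle to be this last step: the surgery that splices the prefix of $\pi$, the single deviation $\sigma_c$, and \eve's safe continuation into one play consistent with the fixed memoryless $\stratadam$. One must be careful that reachability of $q$ along $\pi$ (\eve replaying the controllable choices of $\pi$) together with memorylessness of both strategies are exactly what make this concatenation well defined and consistent, and that determinacy (Proposition~\ref{pro:win-lose-safe}) is what licenses the safe strategy $\strateve'$ from $q'$. The abstract part of the argument, relating $\Delta^\abs$-successors to $\gamma(W)$ via Lemma~\ref{lem:overappU}, is routine by comparison.
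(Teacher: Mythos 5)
Your proof is correct, but it takes a genuinely different route from the paper's. The paper argues directly and inductively: it stratifies the fixpoint $W=\upreover^*(\calU^\abs)$ into its iterates, introduces witness-pair operators $\stratopadam^j(\calU^\abs)$ recording which state-action pairs justify each iteration of $\upreover$, uses the fact that a memoryless winning $\stratadam$ forces every consistent play into $\calU$ within a uniform bound $\iota$, and then shows by backward induction on the distance $j$ to the error that $\stratadam(\pi[\iota-j])$ is a witness at level $j$ --- pushing each concrete step through the conservative $\Delta^\abs$ and using the iterate-wise inclusion $\overline{\alpha}(\upre^{j-1}(\calU)) \subseteq \upreover^{j-1}(\calU^\abs)$ coming from Lemma~\ref{lem:overappU}. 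You instead argue by contradiction using only the full fixpoint: a deviation $\sigma_u \notin \qstratadam_W(\overline{\alpha}(q))$ yields an escape response $\sigma_c$ whose concrete successor $q'$ lies outside $\gamma(W) \supseteq \upre^*(\calU)$, determinacy then hands Eve a memoryless winning strategy from $q'$, and splicing the prefix of $\pi$, the deviation, and Eve's safe continuation produces a play consistent with the memoryless $\stratadam$ that never visits $\calU$ --- contradicting that $\stratadam$ wins. Your route avoids the uniform-bound and backward-induction bookkeeping entirely and is arguably more transparent; the price is an appeal to determinacy at an arbitrary state, which slightly stretches Proposition~\ref{pro:win-lose-safe} as literally stated (it speaks only of $q_I$). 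This is harmless --- memoryless determinacy of finite safety games, which the paper already cites, holds from every state --- but you should say so explicitly rather than cite the proposition verbatim. The paper's proof, by contrast, never needs Eve's winning strategy at all and stays entirely on Adam's side, so it is self-contained modulo Lemma~\ref{lem:overappU}. Both arguments use memorylessness of $\stratadam$ in an essential way --- the paper to make $\iota$ finite, you to make the spliced play well defined and consistent --- and you correctly identify that point as the delicate step.
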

\begin{proof}
	Let $S^\abs \subseteq Q^\abs$ and $\stratopadam^i(S^\abs) = \{
	(q, \sigma_u) \in Q^\abs \times \Sigma_u \st \forall \Sigma_c.\
	\exists r : (q,\sigma_u,\sigma_c) \in \Delta^\abs$ and $r \in
	\upreover^i(S^\abs)\}$. Clearly $\stratopadam^{i-1}(S^\abs) \sqsubseteq
	\qstratadam$.
	
	Let $\iota = \max\{ i_\pi \st \pi \in \plays(G,\stratadam)\}$, which is finite
	since $\stratadam$ is memoryless and winning for \adam.

	Consider any play~$\pi \in \plays(G, \stratadam)$. We show that
	\(
		\stratadam(\pi[\iota - j]) \in \gamma(\stratopadam^{j}(\mathcal{U}^\abs)).
	\)

	By definition of~$\iota$, for all $\sigma_c \in \Sigma_c$,
	$\delta(\psi[\iota - 1], \stratadam(\psi[\iota - 1]), \sigma_c) \in
	\mathcal{U}$.  By construction of $\Delta^\abs$, we have that for all
	$\sigma_c \in \Sigma_c$ there exists $r^\abs$ such that
	$(\overline{\alpha}(\psi[\iota - 1]),\stratadam(\psi[\iota -
	1]),\sigma_c,r^\abs) \in \Delta^\abs$ where $r^\abs \in
	\overline{\alpha}(\mathcal{U})$. This implies that
	$(\overline{\alpha}(\psi[\iota - 1]), \stratadam(\psi[\iota - 1])) \in
	\stratopadam(\overline{\alpha}(\mathcal{U}))$. Note that from the
	definition of $\mathcal{U}^\abs$ we get that
	$\overline{\alpha}(\mathcal{U}) \subseteq \mathcal{U}^\abs$ and that
	since $\stratopadam$ is monotone $ (\overline{\alpha}(\psi[\iota - 1]),
	\stratadam(\psi[\iota - 1])) \in \stratopadam(\mathcal{U}^\abs)$. Thus
	$\stratadam(\psi[\iota - 1]) \in \gamma(\qstratadam)(\psi[\iota - 1])$. 
	
	Consider now $2\leq j \leq \iota$.  For all $\sigma_c \in \Sigma_c$,
	$\delta(\psi[\iota - j], \stratadam(\psi[\iota - j]), \sigma_c) \in
	\upre^{j-1}(\calU)$.  It follows that for any~$\sigma_c \in \Sigma_c$,
	there exists $r^\abs \in \overline{\alpha}(\upre^{j-1}(\calU))$ with
	$(\overline{\alpha}(\psi[\iota-j]), \stratadam(\psi[\iota - j]),
	\sigma_c, r^\abs) \in \Delta^\abs$.  We have $r^\abs \in
	\overline{\alpha}(\upre^{j-1}(\gamma(\calU^\abs)))$ by monotonicity
	of~$\upre$, and by Lemma~\ref{lem:overappU}, we get $r^\abs \in
	\upreover^{j-1}(\calU^\abs)$. Hence, $(\overline{\alpha}(\psi[\iota-j]),
	\stratadam(\psi[\iota-j])) \in \stratopadam(\upreover^{j-1}(\calU^\abs))
	\subseteq \stratopadam^j(\calU^\abs)$.
\end{proof}

\paragraph{Guiding $\upre$ using $\qstratadam_W$}
For convenience, let $\qstratadam = \qstratadam_W$.  We define the
concrete $\upre$ operator restricted to the quasi-strategy
$\cqstratadam$ as follows.
\[
	\upre_\cqstratadam(S) = \{q \in Q \st \exists \sigma_u
		\in \cqstratadam(q).\
		\forall \sigma_c \in \Sigma_c:
		\delta (q^\abs,\sigma_u,\sigma_c) \in S\}.
\]

$\upre_\cqstratadam(S)$ yields the set states from which \adam can force to
reach states in $S$ by using actions compatible with the given quasi-strategy.
Proposition~\ref{pro:all-unsafe-strats} implies that because the quasi-strategy
was extracted from the abstract uncontrollable predecessors fixpoint, this
restriction is no loss of generality. Indeed, if \adam has a winning strategy it
is included in the quasi-strategy. It follows that if the abstract game is
winning for \eve, then this will be detected by $\upre$ restricted
to~$\gamma(\qstratadam)$.
\begin{theorem}
	Let~$G^\abs$ be a conservative abstraction of a game~$G$,
	and~$\qstratadam$ be the quasi-strategy for \adam defined by
	$\upreover^*(\calU^\abs)$.  $q_I \not \in
	\upre^*_{\gamma(\qstratadam)}(\calU)$ if and only if $q_I \not \in
	\upre^*(\calU)$.
\end{theorem}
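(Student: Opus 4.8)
The plan is to prove the two implications separately. The inclusion $\upre^*_{\gamma(\qstratadam)}(\calU) \subseteq \upre^*(\calU)$ is routine and yields one direction; the nontrivial content is the reverse, namely that $q_I \in \upre^*(\calU)$ forces $q_I \in \upre^*_{\gamma(\qstratadam)}(\calU)$, and this is where Proposition~\ref{pro:all-unsafe-strats} does the work.

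First I would dispatch the easy inclusion. Since $\gamma(\qstratadam)(q) \subseteq \Sigma_u$ for every $q$, any witness $\sigma_u$ for membership of $q$ in $\upre_{\gamma(\qstratadam)}(S)$ is also a witness for $q \in \upre(S)$, so $\upre_{\gamma(\qstratadam)}(S) \subseteq \upre(S)$ for all $S \subseteq Q$. Both operators are monotone on the finite powerset lattice, so comparing the Kleene iterations $X_0 = \emptyset$, $X_{n+1} = \calU \cup \upre_{\gamma(\qstratadam)}(X_n)$ with the analogous iteration for $\upre$ gives $X_n \subseteq \upre^n(\calU)$ by a one-line induction, whence $\upre^*_{\gamma(\qstratadam)}(\calU) \subseteq \upre^*(\calU)$. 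In particular $q_I \notin \upre^*(\calU)$ implies $q_I \notin \upre^*_{\gamma(\qstratadam)}(\calU)$.

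For the converse I would assume $q_I \in \upre^*(\calU)$ and, instead of an arbitrary winning strategy, work with the \emph{attractor strategy}. Writing $\upre^n(\calU)$ for the Kleene iterates and $\mathrm{rank}(q)$ for the least $n$ with $q \in \upre^n(\calU)$ (so $\mathrm{rank}(q) = 0$ exactly on $\calU$), I define $\stratadam^\ast(q)$, for $\mathrm{rank}(q) = n > 0$, to be some $\sigma_u$ with $\delta(q,\sigma_u,\sigma_c) \in \upre^{n-1}(\calU)$ for all $\sigma_c$; such a $\sigma_u$ exists by the definition of $\upre$. Because the rank strictly decreases along every play consistent with $\stratadam^\ast$ until $\calU$ is hit, $\stratadam^\ast$ is a winning strategy for \adam from $q_I$, so Proposition~\ref{pro:all-unsafe-strats} applies and guarantees $\stratadam^\ast(q) \in \gamma(\qstratadam)(q)$ at every $q \notin \calU$ reachable from $q_I$ under $\stratadam^\ast$. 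I would then prove, by strong induction on $\mathrm{rank}(q)$, that every $q$ reachable from $q_I$ under $\stratadam^\ast$ lies in $\upre^*_{\gamma(\qstratadam)}(\calU)$: the base case $\mathrm{rank}(q)=0$ holds since $\calU \subseteq \upre^*_{\gamma(\qstratadam)}(\calU)$, and for $\mathrm{rank}(q)=n>0$ the action $\sigma_u = \stratadam^\ast(q)$ lies in $\gamma(\qstratadam)(q)$ while each successor $q' = \delta(q,\sigma_u,\sigma_c)$ satisfies $\mathrm{rank}(q') \le n-1$ and is again reachable under $\stratadam^\ast$, so the induction hypothesis places every such $q'$ in $\upre^*_{\gamma(\qstratadam)}(\calU)$; by the definition of $\upre_{\gamma(\qstratadam)}$ this puts $q$ into $\upre_{\gamma(\qstratadam)}\big(\upre^*_{\gamma(\qstratadam)}(\calU)\big) \subseteq \upre^*_{\gamma(\qstratadam)}(\calU)$. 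Taking $q = q_I$ finishes this direction.

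The main obstacle is ensuring that the restriction to quasi-strategy actions loses nothing, and the device that makes it go through is choosing the attractor strategy rather than an arbitrary winning one, so that a single rank function decreases along \adam's \emph{own} moves and the guarantee of Proposition~\ref{pro:all-unsafe-strats} lands on exactly the states traversed in the induction. A point I would be careful about is that Proposition~\ref{pro:all-unsafe-strats} only constrains \adam's choices on plays issued from $q_I$ before $\calU$ is reached; this is precisely why the theorem speaks of $q_I$ rather than the whole fixpoint, and why I quantify the induction over states reachable from $q_I$ under $\stratadam^\ast$ instead of over all of $Q$. Finally I would verify the two bookkeeping facts used above — preservation of reachability and the strict rank drop when moving from $q$ to $q'$ — both of which are immediate from the memoryless, rank-decreasing definition of $\stratadam^\ast$.
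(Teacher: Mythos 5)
Your proof is correct and follows the same skeleton as the paper's: the routine inclusion $\upre^i_{\gamma(\qstratadam)}(S) \subseteq \upre^i(S)$ gives one direction, and Proposition~\ref{pro:all-unsafe-strats} applied to a winning strategy of \adam gives the other. Where you differ is in how the second direction is finished. The paper invokes Proposition~\ref{pro:win-lose-safe} to obtain an arbitrary memoryless winning strategy $\stratadam$ and then asserts in one line that $\bigcup_{\pi,\,0\le j\le i_\pi}\pi[j] \subseteq \upre^*_{\gamma(\qstratadam)}(\calU)$ follows from Proposition~\ref{pro:all-unsafe-strats}; that assertion in fact compresses exactly the backward induction you spell out. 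By instantiating $\stratadam$ as the attractor strategy you obtain a rank that strictly decreases along \adam's own moves, which makes the induction mechanical and also spares the appeal to determinacy, since the attractor strategy is winning by construction. One wording slip is worth fixing: your invariant quantifies over \emph{all} states reachable from $q_I$ under $\stratadam^{\ast}$, but after a play first hits $\calU$ the attractor strategy is unconstrained, so post-$\calU$ reachable states need not lie in $\upre^*_{\gamma(\qstratadam)}(\calU)$ (a finite-rank state reachable only through $\calU$ gets no compatibility guarantee, and an infinite-rank one may even lie outside $\upre^*(\calU)$), because Proposition~\ref{pro:all-unsafe-strats} constrains \adam only at positions $i < i_\pi$. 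You flag this caveat yourself, and your argument survives it: since ranks decrease strictly along attractor moves and rank $0$ coincides with $\calU$, every state your induction actually touches lies on a prefix that has not yet visited $\calU$, hence occurs at some position $i < i_\pi$ of a consistent play where the compatibility guarantee applies. Restating the invariant as ``reachable along a prefix that has not yet visited $\calU$'' (with $\calU$ itself absorbed into the base case) makes the proof airtight, and arguably more rigorous than the paper's own one-line conclusion.
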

\begin{proof}
	Observe that since $\upre_{\gamma(\qstratadam)}$ is a restricted version
	of $\upre$, we have that $\upre^i_{\gamma(\qstratadam)}(S) \subseteq
	\upre^i(S)$ for any $S \subseteq Q$ and all $i \ge 0$. Thus, if $q_I
	\not \in \upre^*(\calU)$ then $q_I \not \in
	\upre^*_{\gamma(\qstratadam)}(\calU)$.

	For the other direction recall that from
	Proposition~\ref{pro:win-lose-safe} we have that $q_I \in
	\upre^*(\calU)$ if and only if \adam has a winning strategy in $G$.
	Assume $q_I \in \upre^*(\calU)$ and that $\stratadam$ is a winning
	strategy for \adam in $G$. By Proposition~\ref{pro:all-unsafe-strats} we
	get that
	\[
		\bigcup_{\substack{\pi \in \plays(G,\stratadam)\\ 0 \le j \le
		i_\pi}} \pi[j] \subseteq \upre^*_{\gamma(\qstratadam)}(\calU).
	\]
	In particular, this implies that $q_I \in
	\upre^*_{\gamma(\qstratadam)}(\calU)$, which concludes the proof.
\end{proof}

\paragraph{Reachable states under $\qstratadam_W$}
As a second optimization, we restrict the exploration of both the concrete and
abstract state spaces to the set of states which are reachable from the initial
state when \adam plays according to a winning strategy. This will allow us to
prune the search space. As we will show, the set of states that are winning for \adam
but not reachable from the initial state, or those states reached by strategies
losing for \adam can be safely ignored.

Let $\post(S, \stratadam)$ be the set of states reachable from $s \in Q$ if
\adam plays according to $\stratadam$. We now formally define
\[
	\calR(G) = \bigcup_{\substack{\stratadam \text{ winning for \shortadam}\\ \pi \in
	\plays(G,\stratadam)\\ i \ge 0}} \pi[i].
\]
Note that $\calR(G)$ is empty if the circuit is controllable. We will omit~$G$
from~$\calR(G)$ when it is clear from the context. 
Ideally, we would like to restrict our computations to~$\calR(G)$. However,
computing~$\calR(G)$ is clearly as difficult as solving realizability
of the safety game $G$, so we will rather
consider over-approximations of this set computed on conservative abstractions
of~$G^\abs$.
For any $S^\abs \subseteq Q^\abs$, and $\qstratadam$ a
quasi-strategy for $\adam$ in $G^\abs$, the \emph{possible successors
under $\qstratadam$} are defined as follows.
\begin{align*}
	\post(S^\abs, \qstratadam) = \{r^\abs \in Q^\abs \st
	\exists q^\abs \in S^\abs.\ \exists \sigma_u. \in
	\qstratadam(q^\abs).\ \exists \sigma_c \in \Sigma_c: 
	(q^\abs, \sigma_u, \sigma_c, r^\abs) \in \Delta^\abs\}.
\end{align*}
Note that the $\post$ operator can be computed symbolically as follows.
\[
	\post(S^\abs, \qstratadam) = \exists P .\ \exists X_u.\
	\exists X_c : T^\abs(P, X_u, X_c, P') \land S^\abs(P) \land
	\qstratadam(P, X_u).
\]
Let $G^\abs$ be a conservative abstraction of a game $G$ and $\qstratadam_W$ the
quasi-strategy defined by $\upreover^*(\mathcal{U}^\abs)$.
Now, Prop.~\ref{pro:all-unsafe-strats} implies the following result.
\begin{proposition}
	Let $G^\abs$ be a conservative abstraction of a game $G$ and
	$\qstratadam_W$ the quasi-strategy defined by
	$\upreover^*(\calU^\abs)$. Then $\calR(G) \subseteq \gamma(\post^*(q_I^\abs,
	\qstratadam_W))$.
\end{proposition}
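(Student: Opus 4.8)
The plan is to argue pointwise and by a forward induction along plays. Fix a concrete state $q \in \calR(G)$; by definition of $\calR(G)$ there are a strategy $\stratadam$ that is winning for \adam, a play $\pi \in \plays(G,\stratadam)$, and an index $i \ge 0$ with $q = \pi[i]$. Since $Q^\abs$ is a partition of $Q$, I write $\overline{\alpha}(\pi[j])$ for the unique abstract state containing $\pi[j]$ (the same abuse of notation as in the proof of Lemma~\ref{pro:abs-vals}). It then suffices to establish that $\overline{\alpha}(\pi[j]) \in \post^*(q_I^\abs, \qstratadam_W)$ for every $j$ up to $i$: this yields $q = \pi[i] \in \gamma(\post^*(q_I^\abs, \qstratadam_W))$ by definition of $\gamma$, and since $q$ was arbitrary it gives the claimed inclusion.

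First I would dispatch the base case. Because $\pi[0] = q_I$ and the abstraction distinguishes the initial state, $\overline{\alpha}(q_I) = q_I^\abs$, which is the seed of the reachability iteration $\post^*(q_I^\abs, \qstratadam_W)$; hence the property holds for $j = 0$. For the inductive step I assume $\overline{\alpha}(\pi[j]) \in \post^*(q_I^\abs, \qstratadam_W)$. As $\pi$ is consistent with $\stratadam$ and some \eve strategy, the transition $\pi[j] \to \pi[j+1]$ is realized by $\sigma_u = \stratadam(\pi[j])$ and some $\sigma_c \in \Sigma_c$, i.e. $\pi[j+1] = \delta(\pi[j], \sigma_u, \sigma_c)$. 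The defining condition of $\Delta^\abs$ lifts this concrete step to the abstract one $(\overline{\alpha}(\pi[j]), \sigma_u, \sigma_c, \overline{\alpha}(\pi[j+1])) \in \Delta^\abs$. To conclude $\overline{\alpha}(\pi[j+1]) \in \post(\{\overline{\alpha}(\pi[j])\}, \qstratadam_W) \subseteq \post^*(q_I^\abs, \qstratadam_W)$, it remains to verify that $\sigma_u$ is permitted by the quasi-strategy, i.e. $\sigma_u \in \qstratadam_W(\overline{\alpha}(\pi[j]))$.

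For indices $j < i_\pi$ this is exactly Proposition~\ref{pro:all-unsafe-strats}, once one unfolds $\gamma(\qstratadam_W)(\pi[j]) = \qstratadam_W(\overline{\alpha}(\pi[j]))$ from the definition of the concretized quasi-strategy. So the core of the induction carries over verbatim from that proposition for all prefixes that have not yet reached $\calU$, and the only missing ingredient is the lifting through $\Delta^\abs$ plus the fixpoint closure of $\post^*$, both of which are routine.

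The part I expect to be the main obstacle is covering indices $j \ge i_\pi$, since Proposition~\ref{pro:all-unsafe-strats} constrains \adam's moves only before $\calU$ is reached, whereas $\calR(G)$ ranges over all $i \ge 0$. Here I would exploit that $\calU$ is absorbing in our circuit games, the output latch staying high once it becomes high, so that $\pi[j] \in \calU$ for every $j \ge i_\pi$. Combined with the conservativeness condition $\calU \subseteq \gamma(\calU^\abs)$ this gives $\overline{\alpha}(\pi[j]) \in \calU^\abs \subseteq \upreover^*(\calU^\abs) = W$; and because every successor of such a state is again in $\calU$, for each $\sigma_c$ the abstract successor $\overline{\alpha}(\delta(\pi[j], \sigma_u, \sigma_c))$ lies in $W$, whence $\sigma_u \in \qstratadam_W(\overline{\alpha}(\pi[j]))$ and the induction continues unchanged. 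I would flag this as the delicate step, since it is the only place where the structural assumption on $\calU$ is used; if one were instead to restrict attention to the states relevant to the $\upre$ fixpoint, namely those with $j \le i_\pi$, the induction via Proposition~\ref{pro:all-unsafe-strats} alone would already close the argument.
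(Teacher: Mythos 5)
Your proof is correct, and it is essentially the argument the paper intends: the paper offers no written proof of this proposition at all, merely remarking that it ``follows'' from Proposition~\ref{pro:all-unsafe-strats}, and your forward induction along a play --- lifting each concrete step through $\Delta^\abs$ and using Proposition~\ref{pro:all-unsafe-strats} to certify $\stratadam(\pi[j]) \in \qstratadam_W(\overline{\alpha}(\pi[j]))$ for $j < i_\pi$ --- is exactly the natural unfolding of that remark. The one place where you go beyond the paper is the case $j \ge i_\pi$, and you are right to flag it: since $\calR(G)$ is defined with $i$ ranging over \emph{all} of $\mathbb{N}$ while Proposition~\ref{pro:all-unsafe-strats} only constrains \adam's moves strictly before $\calU$ is reached, the paper's one-line justification silently glosses over exactly this point. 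Your patch is sound: by assumption (i) on circuits the error latch is absorbing, so $\pi[j] \in \calU \subseteq \gamma(\calU^\abs)$ for $j \ge i_\pi$; hence $\overline{\alpha}(\pi[j]) \in \calU^\abs \subseteq W$, every abstract successor under any $(\sigma_u,\sigma_c)$ lands in $\calU^\abs \subseteq W$ (using totality of $\delta$ and the definition of $\Delta^\abs$), so \emph{every} $\sigma_u$ is permitted by $\qstratadam_W$ there and the induction closes. Your alternative observation --- that restricting $\calR(G)$ to indices $j \le i_\pi$ would make Proposition~\ref{pro:all-unsafe-strats} alone suffice, which is all the algorithm actually needs --- is also accurate.
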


Now, the first purpose of defining over-approximations of~$\calR(G)$ is to
restrict the fixpoint computations on the abstract game to these states, so that
the considered sets of states are smaller. This will, hopefully, lead to smaller
BDDs. We define the $\upreover$ fixpoint computation restricted to
over-approximations of~$\calR$.

\begin{theorem}
	\label{thm:fp-over-rest}
	Let $G^\abs$ be a conservative abstraction of a safety game $G$,
	and let~$R^\abs \subseteq Q^\abs$ with $\calR \subseteq \gamma(R^\abs)$. 
	Then $\gamma(\mu X.\ \mathcal{U}^\abs \cup \upreover(X)) \cap \calR =
	\gamma(\mu X.\ (\mathcal{U}^\abs \cup \upreover(X)) \cap R^\abs) \cap \calR$.
\end{theorem}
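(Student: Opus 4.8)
The plan is to write $W^\abs = \mu X.\ \calU^\abs \cup \upreover(X)$ for the unrestricted abstract fixpoint and $W^\abs_R = \mu X.\ (\calU^\abs \cup \upreover(X)) \cap R^\abs$ for the restricted one, and to establish the two inclusions separately. The inclusion $\gamma(W^\abs_R) \cap \calR \subseteq \gamma(W^\abs) \cap \calR$ is immediate: the operator $F_R(X) = (\calU^\abs \cup \upreover(X)) \cap R^\abs$ is pointwise below $F(X) = \calU^\abs \cup \upreover(X)$, so by monotonicity of the least-fixpoint operator $W^\abs_R \subseteq W^\abs$, and $\gamma$ is monotone. For the reverse inclusion I would first observe that it suffices to prove the single containment $\calR \subseteq \gamma(W^\abs_R)$. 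Indeed, combined with $W^\abs_R \subseteq W^\abs$ this gives $\calR \subseteq \gamma(W^\abs_R) \subseteq \gamma(W^\abs)$, so both $\gamma(W^\abs) \cap \calR$ and $\gamma(W^\abs_R) \cap \calR$ collapse to $\calR$, which is exactly the claimed equality.

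The core of the argument is therefore $\calR \subseteq \gamma(W^\abs_R)$. Since $\calR = \bigcup_{\stratadam} \calR_{\stratadam}$, where the union ranges over memoryless winning strategies of \adam and $\calR_{\stratadam}$ collects the states visited along plays in $\plays(G,\stratadam)$, I would fix one such $\stratadam$ and show $\calR_{\stratadam} \subseteq \gamma(W^\abs_R)$. The induction is carried out on a strategy-relative measure: for $q \in \calR_{\stratadam}$ let $d_{\stratadam}(q)$ be the largest number of steps needed to first reach $\calU$ from $q$ along plays consistent with $\stratadam$. This measure is finite because $\stratadam$ is memoryless and winning, so no play consistent with it can cycle while avoiding $\calU$; hence every such play meets $\calU$ within $|Q|$ steps. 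Two facts drive the induction: for $q \in \calR_{\stratadam} \setminus \calU$ every concrete successor $\delta(q,\stratadam(q),\sigma_c)$ again lies in $\calR_{\stratadam}$ (one simply prolongs the witnessing play with the action $\stratadam(q)$ and the environment choice $\sigma_c$), and each such successor has $d_{\stratadam}$ strictly below $d_{\stratadam}(q)$.

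Then I would induct on $d_{\stratadam}(q)$. In the base case $d_{\stratadam}(q)=0$ we have $q \in \calU$; conservativeness of the abstraction ($\calU \subseteq \gamma(\calU^\abs)$) gives $\overline{\alpha}(q) \in \calU^\abs$, while $\calR \subseteq \gamma(R^\abs)$ gives $\overline{\alpha}(q) \in R^\abs$, so $\overline{\alpha}(q) \in \calU^\abs \cap R^\abs \subseteq W^\abs_R$, using that $W^\abs_R$ is a fixpoint of $F_R$. In the inductive step $q \notin \calU$, and writing $\sigma_u^\star = \stratadam(q)$ the successors $\delta(q,\sigma_u^\star,\sigma_c)$ all belong to $\calR_{\stratadam}$ with smaller measure, hence lie in $\gamma(W^\abs_R)$ by the induction hypothesis, i.e. $\overline{\alpha}(\delta(q,\sigma_u^\star,\sigma_c)) \in W^\abs_R$ for every $\sigma_c$. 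Because $\Delta^\abs$ over-approximates $\delta$, each transition $(\overline{\alpha}(q),\sigma_u^\star,\sigma_c,\overline{\alpha}(\delta(q,\sigma_u^\star,\sigma_c)))$ lies in $\Delta^\abs$, so the single action $\sigma_u^\star$ witnesses $\overline{\alpha}(q) \in \upreover(W^\abs_R)$; together with $\overline{\alpha}(q) \in R^\abs$ and the fixpoint identity $W^\abs_R = (\calU^\abs \cup \upreover(W^\abs_R)) \cap R^\abs$ this yields $\overline{\alpha}(q) \in W^\abs_R$, i.e. $q \in \gamma(W^\abs_R)$.

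The step I expect to be the main obstacle is choosing the right induction measure. Inducting on the concrete $\upre$-rank fails: a rank-decreasing attractor action need not keep the successors inside $\calR$, whereas the action $\stratadam(q)$ prescribed by an \emph{arbitrary} winning strategy keeps successors in $\calR$ but need not decrease the $\upre$-rank. Working relative to a fixed memoryless $\stratadam$ with the measure $d_{\stratadam}$ reconciles the two requirements, so the crux is to establish the finiteness of $d_{\stratadam}$ and its strict decrease along $\stratadam$; once these are in place, the over-approximation property of $\Delta^\abs$ together with the fixpoint closure of $W^\abs_R$ make the remaining manipulations routine.
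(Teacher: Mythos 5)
Your overall architecture is sound and close in spirit to the paper's own strategy-based reasoning (compare the backward induction over a fixed memoryless winning strategy in the proof of Proposition~\ref{pro:all-unsafe-strats}), but there is a genuine gap at exactly the step you flag as the crux: the finiteness of $d_{\stratadam}$. The set $\calR$ is defined as the union of $\pi[i]$ over \emph{all} $i \geq 0$, not only over positions $i \leq i_\pi$ before the first visit to $\calU$. For a state $q$ that occurs only \emph{after} the first visit to $\calU$, your justification fails: a play consistent with $\stratadam$ starting at $q$ may avoid $\calU$ forever without contradicting that $\stratadam$ is winning, because the visit to $\calU$ already happened in the prefix leading to $q$. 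Concretely, take $Q = \{q_I, u, s\}$ and $\calU = \{u\}$, with every transition from $q_I$ going to $u$, from $u$ to $s$, and from $s$ to $s$: \adam's unique strategy is winning, $s \in \calR$, yet $d_{\stratadam}(s) = \infty$ and indeed $s \notin \upre^*(\calU)$. The same example shows that the stronger statement you reduce to, $\calR \subseteq \gamma(W^\abs_R)$, is simply false for general safety games (take the abstraction by singletons: then $s \notin \gamma(W^\abs) \supseteq \gamma(W^\abs_R)$), even though the theorem's stated equality survives because such a state lies on neither side of it. So your reduction genuinely uses more than the theorem's hypotheses.

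Both defects are repaired by invoking the paper's standing assumption on circuits that the error latch, once true, stays true, so that $\calU$ is absorbing: then every state of $\calR$ occurring at a position $i \geq i_\pi$ is itself in $\calU$ and falls into your base case with $d_{\stratadam}(q) = 0$, while for states occurring at positions $i < i_\pi$ your ``no $\calU$-avoiding cycle'' argument (or K\"onig's lemma applied to the finitely branching tree of consistent plays truncated at the first visit to $\calU$) does give finiteness and strict decrease of the measure. You should make this case split explicit, since as written the blanket finiteness claim is the one false step. The remaining ingredients are all correct: the monotonicity direction via $F_R \subseteq F$ pointwise; the fact that, $Q^\abs$ being a partition, $q \in \gamma(R^\abs)$ is equivalent to the unique abstract state containing $q$ lying in $R^\abs$ (and likewise for $\calU \subseteq \gamma(\calU^\abs)$ in the base case); the lifting of the single action $\stratadam(q)$ through the over-approximating $\Delta^\abs$ to witness membership in $\upreover(W^\abs_R)$; and the closure of the restricted fixpoint under $(\calU^\abs \cup \upreover(\cdot)) \cap R^\abs$.
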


The same idea can be applied to the post operator.
\begin{theorem}
	\label{thm:post-Ra}
	Let $G^\abs$ be a conservative abstraction of a safety game $G$,
	and let~$R^\abs \subseteq Q^\abs$ with $\calR \subseteq \gamma(R^\abs)$. 
	Then $\gamma(\mu X.\ \{q_I^\abs\} \cup \post(X)) \cap \calR =
	\gamma((\mu X.\ \{q_I^\abs\} \cup \post(X,\qstratadam)) \cap R^\abs ) \cap
	\calR$.
\end{theorem}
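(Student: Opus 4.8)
The plan is to prove the two set inclusions separately, writing $F(X) = \{q_I^\abs\} \cup \post(X,\qstratadam)$ for the functional whose least fixpoint appears on the left (so that $\post(X)$ there abbreviates $\post(X,\qstratadam)$), and $F_R(X) = (\{q_I^\abs\} \cup \post(X,\qstratadam)) \cap R^\abs$ for the restricted one on the right. Two elementary facts will be used throughout. First, since the abstract states form a partition of $Q$, concretization distributes over intersection, i.e. $\gamma(A \cap B) = \gamma(A) \cap \gamma(B)$ for $A,B \subseteq Q^\abs$; combined with $\calR \subseteq \gamma(R^\abs)$ this already gives $\gamma(S^\abs) \cap \calR = \gamma(S^\abs \cap R^\abs) \cap \calR$ for every $S^\abs$, which disposes of the identity immediately if the outer $\cap R^\abs$ is read as applied to the full fixpoint; the substantive reading, matching Theorem~\ref{thm:fp-over-rest}, restricts the iteration, and I treat that below. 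Second, a concrete state $q$ lies in $\calR$ exactly when some play $\pi$ consistent with a winning strategy $\stratadam$ of \adam visits it, in which case every prefix state $\pi[0],\dots,\pi[i]=q$ is itself in $\calR$.

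For the inclusion $\supseteq$, I would simply observe that $F_R(X) \subseteq F(X)$ for every $X$, so monotonicity of the least-fixpoint operator gives $\mu F_R \subseteq \mu F$, whence $\gamma(\mu F_R) \cap \calR \subseteq \gamma(\mu F) \cap \calR$. This direction is routine.

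The content is the inclusion $\subseteq$. Take $q \in \gamma(\mu F) \cap \calR$; it suffices to show $\overline{\alpha}(q) \in \mu F_R$, since then $q \in \gamma(\mu F_R)$ and $q \in \calR$ already. Fix a winning strategy $\stratadam$ and a play $\pi$ consistent with it with $\pi[i] = q$. By the remark above each $\pi[j]$, $0 \le j \le i$, lies in $\calR \subseteq \gamma(R^\abs)$, so since the abstraction is a partition, $\overline{\alpha}(\pi[j]) \in R^\abs$. I then prove by induction on $j$ that $\overline{\alpha}(\pi[j]) \in \mu F_R$. The base case $\overline{\alpha}(\pi[0]) = q_I^\abs$ holds because $q_I^\abs \in \{q_I^\abs\} \cap R^\abs = F_R(\varnothing)$. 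For the step, the concrete transition $\pi[j+1] = \delta(\pi[j], \stratadam(\pi[j]), \sigma_c)$ induces, by construction of $\Delta^\abs$, the abstract transition $(\overline{\alpha}(\pi[j]), \stratadam(\pi[j]), \sigma_c, \overline{\alpha}(\pi[j+1])) \in \Delta^\abs$; Proposition~\ref{pro:all-unsafe-strats} gives $\stratadam(\pi[j]) \in \qstratadam(\overline{\alpha}(\pi[j]))$, so $\overline{\alpha}(\pi[j+1]) \in \post(\overline{\alpha}(\pi[j]), \qstratadam)$. With $\overline{\alpha}(\pi[j]) \in \mu F_R$ (induction hypothesis) and $\overline{\alpha}(\pi[j+1]) \in R^\abs$, monotonicity of $\post$ yields $\overline{\alpha}(\pi[j+1]) \in (\{q_I^\abs\} \cup \post(\mu F_R, \qstratadam)) \cap R^\abs = F_R(\mu F_R) = \mu F_R$. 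Taking $j = i$ finishes the inclusion.

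The main obstacle I anticipate is the side condition $j < i_\pi$ under which Proposition~\ref{pro:all-unsafe-strats} is stated: the induction invokes it at each step, which is legitimate only while the play has not yet reached $\calU$. For witnesses with $i \le i_\pi$ this is immediate (the final step into $\calU$ at $j = i_\pi-1$ is still covered), but a winning play may visit a state of $\calR$ only after the error index. To package this edge case cleanly I would lean on the immediately preceding proposition, $\calR \subseteq \gamma(\post^*(q_I^\abs, \qstratadam_W))$, whose witnessing path already reaches such states under $\qstratadam$; the only new ingredient is that every abstract state on that path lies in $R^\abs$, which follows again from $\calR \subseteq \gamma(R^\abs)$ together with the absorption of $\calU$ guaranteed by the $\out$-latch assumption. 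Reusing that path rather than re-deriving it is the part that needs the most care.
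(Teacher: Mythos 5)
Your proposal is correct and follows essentially the same route as the paper: the easy inclusion by monotonicity of the restricted functional, and the substantive one by lifting a concrete play consistent with a winning strategy of \adam into the restricted abstract fixpoint via Proposition~\ref{pro:all-unsafe-strats}, with $\calR \subseteq \gamma(R^\abs)$ keeping every iterate inside $R^\abs$. The $i_\pi$ edge case you flag can be closed directly instead of reusing the previous proposition: for $j \geq i_\pi$ the play stays in $\calU$ by the $\out$-latch absorption, so all abstract successors lie in $\overline{\alpha}(\calU) \subseteq \calU^\abs \subseteq \upreover^*(\calU^\abs)$, hence \emph{every} uncontrollable action belongs to $\qstratadam_W$ at those states and your induction runs unchanged past the error index.
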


\paragraph{Using Abstract Partitioned Transition Relation}
As mentioned earlier, in some circuits, one can improve performance by using
only a partitioned transition relation and avoiding the computation of the
monolithic transition relation. In this paragraph, we explain how this can be
achieved and combined with the reachability analysis in abstract games.

Note that partitioning the transition relation works well in instances in which
the transition relation is large (i.e. the size of the BDD needed to represent
$T^\abs$ is large) but the fixpoint is reached in a small number of steps.  On
the contrary, if too many iterations are needed to obtain the fixpoint, then it
is often more efficient to construct $T^\abs$ once and use it to compute the
operators $\upreover$ and $\upreunder$, as the cost will be amortized in the long run. These
observations are illustrated in the section on experiments.
\medskip

Let $\psi_p(L,X_u,X_c) = f_p(L')[l' \leftarrow f_l(X_u,X_c,L)]_{l \in L}$.  Then
the $\upreover,\upreunder$ operators can be computed as shown below.
\begin{lemma}
	\label{lemma:partitioned-upre}
	For any~$G^\abs$, 
	\begin{align*}
		\upreover(S^\abs) &= \exists X_u.\ \forall X_c : \overline{\alpha}(S^\abs(P')[p'
			\leftarrow \psi_p(X_u,X_c,L)]_{p \in P}) \\
		\upreunder(S^\abs) &= \lnot (\forall X_u.\ \exists X_c :
			\overline{\alpha}(\lnot S^\abs(P')[p' \leftarrow \psi_p(X_u,X_c,L)]_{p \in
			P})).
	\end{align*}
\end{lemma}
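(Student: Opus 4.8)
The plan is to establish the $\upreover$ identity first and then derive the $\upreunder$ identity from it by duality, so that no new combinatorial work is needed for the second line. For $\upreover$, the outer quantifier prefix $\exists X_u.\ \forall X_c$ is identical on both sides, so it suffices to prove the inner equivalence, namely that for every fixed valuation of $X_u$ and $X_c$,
\[
    \exists P' : T^\abs(P,X_u,X_c,P') \land S^\abs(P')
    \;\equiv\;
    \overline{\alpha}\bigl(S^\abs(P')[p' \leftarrow \psi_p(X_u,X_c,L)]_{p \in P}\bigr)(P).
\]
First I would unfold every defined symbol on both sides into its primitive form with explicit biconditionals. On the left, $T^\abs$ expands to $\exists L,L': T(L,X_u,X_c,L') \land \bigwedge_{p}(p \Leftrightarrow f_p(L)) \land \bigwedge_{p'}(p' \Leftrightarrow f_{p'}(L'))$ with $T = \bigwedge_{l}(l' \Leftrightarrow f_l(X_u,X_c,L))$; on the right, I would expand $\overline{\alpha}$ using its definition $\exists L.\ (\cdot) \land \bigwedge_p(p \Leftrightarrow f_p(L))$ and expand the substitution $[p' \leftarrow \psi_p]_{p\in P}$ using the formal rule $f[y\leftarrow g_y] \equiv \exists Y.\ f \land \bigwedge_y(y \Leftrightarrow g_y)$ from the preliminaries.

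The crux is a single quantifier-elimination step. On the left, the constraint $\bigwedge_{l}(l' \Leftrightarrow f_l(X_u,X_c,L))$ determines each $l'$ uniquely from $L,X_u,X_c$, so $\exists L'$ can be discharged by substituting $f_l(X_u,X_c,L)$ for $l'$ throughout; this turns each conjunct $f_{p'}(L')$ into exactly $f_{p'}(L')[l' \leftarrow f_l(X_u,X_c,L)]_{l\in L}$, which under the canonical bijection between $P$ and $P'$ (and between $L$ and $L'$) is precisely $\psi_p(L,X_u,X_c)$. After this step the left-hand side reads $\exists P',L:\ S^\abs(P') \land \bigwedge_{p'}(p' \Leftrightarrow \psi_p) \land \bigwedge_p(p \Leftrightarrow f_p(L))$, and I would check that the expanded right-hand side is the same formula modulo reordering of conjuncts and of the (commuting) existentials over $P'$ and $L$. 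This establishes the inner equivalence, and reattaching $\exists X_u.\ \forall X_c$ yields the first line of the lemma.

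For $\upreunder$ I would proceed purely by duality. Writing the universal as a negated existential, $\forall P'.\ (T^\abs \Rightarrow S^\abs) \equiv \lnot \exists P'.\ (T^\abs \land \lnot S^\abs)$, the inner existential is exactly the $\upreover$ inner formula applied to $\lnot S^\abs$, so by the identity just proved it equals $\lnot \overline{\alpha}(\lnot S^\abs(P')[p'\leftarrow \psi_p]_{p\in P})$. Substituting this and pushing the outer negation through the quantifier prefix by De Morgan, $\exists X_u.\ \forall X_c.\ \lnot \Phi \equiv \lnot(\forall X_u.\ \exists X_c.\ \Phi)$, gives the claimed second line. The main obstacle I anticipate is bookkeeping rather than depth: one must keep the nested substitutions straight — $\psi_p$ hides an inner $\exists L'$ coming from $[l'\leftarrow f_l]$, while the outer $[p'\leftarrow\psi_p]$ hides an $\exists P'$ — and justify commuting and merging these existentials with the one inside $\overline{\alpha}$, which is legitimate only because the biconditional constraints make each eliminated variable functionally determined. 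Verifying that this functional-substitution identity applies cleanly at both the latch and the predicate level is the step that will require the most care.
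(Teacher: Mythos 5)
Your proof is correct and is essentially the argument the lemma's statement presupposes (the paper defers its proof to an appendix): unfold $T^\abs$, $\overline{\alpha}$, and the substitution operator into their biconditional forms, eliminate $\exists L'$ and then $\exists P'$ by the one-point rule for functional constraints --- which is exactly how $\psi_p$ arises --- and obtain the $\upreunder$ line by dualizing via $\forall P'.\,(T^\abs \Rightarrow S^\abs) \equiv \lnot\exists P'.\,(T^\abs \land \lnot S^\abs)$ together with De~Morgan on the $\exists X_u.\,\forall X_c$ prefix. Your closing caveat identifies the one step that genuinely needs justification, and you justify it correctly: commuting negation with substitution, $(\lnot S^\abs)[p' \leftarrow \psi_p]_{p\in P} \equiv \lnot\bigl(S^\abs[p' \leftarrow \psi_p]_{p\in P}\bigr)$, and merging the nested existentials over $L'$, $P'$, and the $\exists L$ inside $\overline{\alpha}$, are sound precisely because each eliminated variable is functionally determined by the biconditional constraints.
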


We also present an operator yielding an over-approximation of the set of
reachable states which can be computed with partitioned transition relations.
Let $S^\abs \in Q^\abs$ and $\qstratadam$ be a quasi-strategy for \adam in
$G^\abs$. 
\[
	\overline{\post}(S^\abs, \qstratadam) = \exists L, X_u : (S^\abs(P) \land
	\qstratadam(P,X_u))[p \leftarrow f_p(L)]_{p \in P} \land
	\bigwedge_{p \in P} \exists X_c: p' \Leftrightarrow \psi_p(X_u,X_c,L).
\]
Note that $\overline{\post}$ is defined, from~$\post$, simply pushing the
quantification over~$X_c$ inside. In fact, the exact definition of~$\post$
contains the transition relation~$T^\abs$, which we want to avoid computing.

The following lemma shows that this yields over-approximations.
\begin{lemma}
	\label{lemma:partitioned-reach}
	The set of abstract states reachable from $S^\abs$ in one step
	if \adam plays according to $\qstratadam$ is contained in
	$\overline{\post}(S^\abs, \qstratadam)$. That is, $\post(S^\abs, \qstratadam)
	\subseteq \overline{\post}(S^\abs, \qstratadam)$.
\end{lemma}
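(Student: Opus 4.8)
The lemma states that the symbolic operator $\overline{\post}(S^\abs, \qstratadam)$ over-approximates the true post-image $\post(S^\abs, \qstratadam)$. The key design feature, noted right before the statement, is that $\overline{\post}$ is obtained from $\post$ by pushing the existential quantification over $X_c$ inward — specifically, inside the conjunction $\bigwedge_{p \in P}$ that replaces the monolithic $T^\abs$. Since existential quantification does not commute with conjunction in general (one only has $\exists X_c (A \land B) \Rightarrow (\exists X_c\, A) \land (\exists X_c\, B)$), pushing $\exists X_c$ inward weakens the formula, which is exactly what yields an over-approximation rather than an equality.

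**Plan of proof.** The plan is to show the set-theoretic inclusion by a direct logical argument: take any abstract state $r^\abs$ in the left-hand side $\post(S^\abs, \qstratadam)$ and argue that its characteristic valuation satisfies the defining formula of $\overline{\post}(S^\abs, \qstratadam)$. First I would unfold the symbolic definition of $\post$ using the monolithic abstract transition relation, namely $\post(S^\abs, \qstratadam) = \exists P.\ \exists X_u.\ \exists X_c : T^\abs(P, X_u, X_c, P') \land S^\abs(P) \land \qstratadam(P, X_u)$, and then substitute the definition of $T^\abs$. Unfolding $T^\abs$ exposes existential quantifiers over the concrete latches $L, L'$ together with the defining equivalences $p \Leftrightarrow f_p(L)$ and $p' \Leftrightarrow f_{p'}(L')$; using these equivalences one can eliminate $L'$ in favour of $L$ via the substitution $\psi_p(X_u,X_c,L) = f_p(L')[l' \leftarrow f_l(X_u,X_c,L)]_{l\in L}$, converting $p' \Leftrightarrow f_{p'}(L')$ into $p' \Leftrightarrow \psi_p(X_u,X_c,L)$. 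After this rewriting, $\post$ has the form $\exists L, X_u, X_c : (S^\abs \land \qstratadam)[p \leftarrow f_p(L)] \land \bigwedge_{p\in P} (p' \Leftrightarrow \psi_p(X_u,X_c,L))$, where a single witness $(L,X_u,X_c)$ serves every conjunct simultaneously.

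**The core step.** From this rewritten form of $\post$, the inclusion into $\overline{\post}$ is immediate: given witnesses $L, X_u, X_c$ satisfying the conjunction for all $p$ at once, the \emph{same} $L, X_u$ together with, for each $p$ separately, the \emph{same} $X_c$ as witness, satisfy $\exists X_c : p' \Leftrightarrow \psi_p(X_u,X_c,L)$ for each $p$ independently. In other words, a common witness trivially furnishes a separate witness per conjunct, so the implication $\post \Rightarrow \overline{\post}$ holds pointwise on valuations, giving $\post(S^\abs, \qstratadam) \subseteq \overline{\post}(S^\abs, \qstratadam)$.

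**Main obstacle.** The routine part is the symbolic bookkeeping — the substitutions and quantifier eliminations that reduce $\post$ to a form directly comparable with $\overline{\post}$. The one step deserving care is verifying that the elimination of $L'$ using $\psi_p$ faithfully preserves the transition semantics; that is, that replacing the pair of equivalences through $T^\abs$ by the single equivalence $p' \Leftrightarrow \psi_p(X_u,X_c,L)$ is sound, which rests on the definition of substitution given in the Preliminaries and on the determinism of the concrete transition functions $f_l$. Once this normalization is in place, the inclusion is a one-line consequence of the generic weakening $\exists X_c\,\bigwedge_p \varphi_p \Rightarrow \bigwedge_p \exists X_c\, \varphi_p$; no converse is claimed, consistent with $\overline{\post}$ being a genuine over-approximation.
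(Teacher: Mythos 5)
Your proof is correct and follows exactly the argument the paper intends: rewrite $\post(S^\abs,\qstratadam)$ without $T^\abs$ by eliminating $P$ and $L'$ through the substitution identity $f[y \leftarrow g_y]_{y\in Y} = \exists Y.\, f \land \bigwedge_{y\in Y}(y \Leftrightarrow g_y)$, after which $\overline{\post}$ differs only in that $\exists X_c$ has been pushed inside the conjunction over $P$, and the generic weakening $\exists X_c\, \bigwedge_{p\in P} \varphi_p \Rightarrow \bigwedge_{p\in P} \exists X_c\, \varphi_p$ yields the inclusion. This matches the paper's own remark that $\overline{\post}$ is obtained from $\post$ ``simply pushing the quantification over $X_c$ inside,'' so no further comparison is needed.
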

Note that one could also push the quantification over $X_u$ inside the
conjunction in order to obtain coarser over-approximations. However, this
alternative definition was not faster to compute, nor did it improve overall
performance in our experiments.

\section{Yet another CEGAR algorithm}\label{sec:cegar}
We present a CEGAR-based synthesis algorithm, given in
Algorithm~\ref{alg:abs_synth}, based on a combination of ideas introduced
in~\cite{ap07} and~\cite{hjm03}. The algorithm constructs abstractions using
-- initially -- three predicates, namely, $p_I$ describing the initial state,
$p_U$ an under-approximation of the losing states, and $p_R$ an
over-approximation of the states reachable from the initial state under winning
strategies of \adam.  The algorithm further refines the abstraction by
localization reduction.  In fact, the initial abstraction consists of the
conservative abstraction defined by these three predicates, and at each
refinement loop, some latch is made ``visible'', that is, added as a predicate.

We give an informal description of the algorithm. Given a conservative
abstraction, the algorithm first computes $W_u$ at line~$1$, the fixpoint
of~$\upreunder$, restricted to~$R^\abs$ which overapproximates~$\calR(G)$.
If the initial
state belongs to~$W_u$, then by Lemma \ref{lem:overappU}, \eve has no winning
strategy (line~$3$). Otherwise, in the while loop of line~$7$, we compute~$W_o$,
the fixpoint for~$\upreover$ restricted to $R^\abs$ -- which is an
over-approximation on reachable states under winning strategies of \adam.  In
this case, if the initial state does \emph{not} belong to~$W_o$, then nor does
it belong to the fixpoint of~$\upre$
and we conclude that the circuit is controllable. Otherwise, we recompute the
fixpoint for~$\upreover$ by decreasing~$R^\abs$: we first compute, at line~$13$,
the quasi-strategy for \adam allowing her to stay inside~$W_o$, then
restrict~$R^\abs$, at line~$14$, to states that are reachable under this
quasi-strategy.  These restrictions are justified since any winning strategy for
\adam is compatible with these (see
Proposition~\ref{pro:all-unsafe-strats}).  If we were not able to conclude, then
the abstraction is too coarse and needs to be refined.  At line~$17$, we compute
the concrete $\upre$ of~$W_u$ restricted to the quasi-strategy and to
$R^\abs$. If it turns out that $W_u$ was already a fixpoint for~$\upre$, then
we know that the circuit is controllable (line~$19$) since $W_u$ does not
contain the initial state. Otherwise, we refine the abstraction by making a
latch visible, but also increasing $\calU^\abs$ using the information computed
with $\upreunder$.  The refinement step is given by the \texttt{refine} function
described in Algorithm~\ref{alg:refine}.

The algorithm is initially called with the abstraction given by the three
predicates $\{p_I,p_U,p_R\}$ defined by $p_I \equiv \{q_I\}$, $p_U \equiv \calU$,
and $p_R \equiv Q$.

\begin{algorithm}
	\small
	\KwData{Safety game $G = \langle Q, q_I, \Sigma_u, \Sigma_c, \delta,
		\mathcal{U} \rangle$, abstraction $G^\abs =
		\langle Q^\abs, q_I^\abs, \Sigma_u, \Sigma_c, \Delta^\abs,
		\mathcal{U}^\abs \rangle$ and $R^\abs \supseteq \calR$.}
	$W_u$ := $\mu X .\ (\mathcal{U}^\abs \cup
		\upreunder(X)) \cap R^\abs$\;
	\If{$q_I^\abs \in W_u$}
	{
		\Return not controllable\;
	}
		$prev$ := $\emptyset$\;
		\While{$R^\abs \neq prev$}
		{
			$prev$ := $R^\abs$\;
			$W_o$ := $\mu X .\ (W_u \cup
				\upreover(X)) \cap R^\abs$\;
			\If{$q_I^\abs \not\in W_o$}
			{
				\Return controllable\;
			}
				$\qstratadam$ := quasi-strategy defined by $(W_o)$\;
				$R^\abs$ := $\mu X .\ (q^\abs_I \cup \post(X,
					\qstratadam)) \cap R^\abs$\;
		}
		$W'_u$ := $(\upre_\cqstratadam(\gamma(W_u)))
			\cap \gamma(R^\abs)$\;
		\If{$W'_u \subseteq \gamma(W_u)$}
		{
			\Return controllable\;
		}
			$Q^\abs_2$ := \texttt{refine}$(Q^\abs, W'_u \cup
				\gamma(W_u), \gamma(R^\abs))$\tcp*[r]{$\underline{\alpha}_2,\overline{\alpha}_2$ are the associated abstraction operators}
			$\mathcal{U}^\abs_2$ := $\underline{\alpha}_2(W'_u \cup
				\gamma(W_u))$\;
			\Return \texttt{abs\_synth}$(G, G^\abs_2,
			\overline{\alpha}_2(\gamma(R^\abs)))$\;
\caption{\texttt{abs\_synth}$(G, G^\abs, R^\abs)$}
\label{alg:abs_synth}
\end{algorithm}

Refinement is achieved symbollicaly by adding a new predicate to our predicate
set $P$. Besides having $W'_u \cup \gamma(W_u)$ and $\gamma(R^\abs)$ replace the
previous $p_U$ and $p_R$, respectively, we also make a new latch ``visible''.
Latches that depend on the value of other visible latches are given priority by
Algorithm~\ref{alg:refine}.

\begin{algorithm}
	\small
	\KwData{Predicate set $P=\{p_I,p_U,p_R,l_{\alpha_1},\ldots,l_{\alpha_m}\}$,
	and sets $U'(L)$ and $R'(L)$}
	$P'$ := $P \setminus \{p_U,p_R,p_I\}$\;
	$interesting$ := $\{m \in L \setminus P' \st m \not\Rightarrow U \text{ and } \lnot
		m \not\Rightarrow U\}$\;
		$useful$ := $\{m \in interesting \st \supp(f_m) \cap
		P' \neq \emptyset \}$\;
	\uIf{$useful \neq \emptyset$}
	{
		$e$ := an element from $useful$\;
	}
	\Else
	{
		$e$ := an element from $interesting$\;
	}
	\Return $P' \cup \{e, U'(L), R'(L), p_I\}$\;
\caption{\texttt{refine}$(P, U(L), R(L))$}
\label{alg:refine}
\end{algorithm}

\begin{theorem}
	\label{thm:correctness}
	Let $G$ be a safety game, $G^\abs$ a conservative abstraction of it and
	$R^\abs \supseteq \post^*(q_I^\abs, \qstratadam)$ where $\qstratadam$ is
	the quasi-strategy defined by $\upreover^*(\mathcal{U}^\abs)$. If
	Algorithm~\ref{alg:abs_synth} returns \emph{controllable} for $(G,
	G^\abs, R^\abs)$ then \eve has a winning strategy in $G$; if it returns
	\emph{not controllable} then \adam has a winning strategy in~$G$.
	Moreover, the algorithms always terminates.
\end{theorem}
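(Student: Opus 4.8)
The plan is to establish the theorem's three claims separately: soundness of the \emph{controllable} verdict, soundness of the \emph{not controllable} verdict, and termination. The overall strategy is to trace each return statement in Algorithm~\ref{alg:abs_synth} back to the semantic guarantees already proven earlier, and to exhibit a strictly decreasing measure for termination.

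For the \emph{not controllable} verdict (line~3), I would argue as follows. This return fires only when $q_I^\abs \in W_u$, where $W_u = \mu X.\ (\calU^\abs \cup \upreunder(X)) \cap R^\abs$. Dropping the intersection with $R^\abs$ only enlarges the fixpoint, so $q_I^\abs \in \upreunder^*(\calU^\abs)$; then Lemma~\ref{lem:overappU} gives $q_I \in \gamma(\upreunder^*(\calU^\abs)) \subseteq \upre^*(\calU)$, and Proposition~\ref{pro:win-lose-safe} yields a winning strategy for \adam. The restriction to $R^\abs$ needs a short justification using the hypothesis $\calR \subseteq \gamma(R^\abs)$ together with Theorem~\ref{thm:fp-over-rest}; I would invoke that theorem (or its $\upreunder$ analogue) to ensure that intersecting with $R^\abs$ does not discard the relevant initial-state information, since $q_I \in \calR$ whenever \adam wins.

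For the \emph{controllable} verdicts there are three exit points: line~9 inside the while loop, line~15 after the loop, and the recursive call at line~18. For line~9 ($q_I^\abs \notin W_o$), I would use that $W_o$ overapproximates the concrete fixpoint restricted to reachable states: by Lemma~\ref{lem:overappU} and Theorem~\ref{thm:fp-over-rest}, $q_I \notin \upre^*(\calU)$, so by Proposition~\ref{pro:win-lose-safe} \eve wins. For line~15, the guard $W'_u \subseteq \gamma(W_u)$ says that the concrete guided operator $\upre_\cqstratadam$ applied to $\gamma(W_u)$ adds nothing new within $\gamma(R^\abs)$, i.e.\ $\gamma(W_u)$ is already a fixpoint of the guided $\upre$; since $q_I \notin \gamma(W_u)$ (inherited from passing the line~2 test), the theorem on guiding $\upre$ by $\qstratadam$ gives $q_I \notin \upre^*(\calU)$, hence \eve wins. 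The recursive case at line~18 reduces to the inductive correctness of \texttt{abs\_synth} on the refined abstraction; here I must check that $(G, G^\abs_2, \overline\alpha_2(\gamma(R^\abs)))$ still satisfies the standing hypotheses---in particular that the new $R^\abs$ overapproximates $\calR$ and that $\calU^\abs_2$ remains a conservative abstract unsafe set containing $\overline\alpha(\calU)$---which follows from Lemma~\ref{lem:under-conc-over} and the monotonicity properties of $\overline\alpha,\underline\alpha$.

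The hard part will be termination. The abstraction is refined by \texttt{refine}, which at each call makes one previously invisible latch visible (the \emph{interesting}/\emph{useful} selection always returns an element of $L \setminus P'$ when one exists), so the number of latch predicates strictly increases and is bounded by $|L|$. I would argue that once all latches are visible the abstraction is exact, so the guided $\upre$ computation coincides with the concrete one and the line~15 or line~9 test must resolve the instance without further recursion. The subtle point is that the inner while loop (lines~7--14) also terminates: $R^\abs$ is monotonically decreased across iterations (the new $R^\abs$ at line~14 is intersected with the old one), and since $Q^\abs$ is finite this chain stabilizes, triggering $R^\abs = prev$. Combining the finite refinement depth with per-call termination of the while loop and of each $\mu$-fixpoint (each computed over the finite lattice $\pow(Q^\abs)$ by Tarski--Knaster) gives overall termination. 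I would present termination last, after the soundness arguments, since it relies on the structural guarantees of \texttt{refine} rather than on the game semantics.
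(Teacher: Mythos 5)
Your skeleton---a case analysis on the four exit points, induction on the recursive calls, and termination by counting visible latches (with the inner while loop stabilizing because $R^\abs$ is non-increasing over a finite lattice)---is exactly the structure of the paper's proof, and your treatments of the post-loop \emph{controllable} return and of termination are essentially the ones given there. The genuine gap is in the inductive invariant you propose to carry through the recursion. You plan to check that ``$\calU^\abs_2$ remains a conservative abstract unsafe set containing $\overline{\alpha}(\calU)$'', but this is both false in general and not the property you need. It is false because $W_u$ is computed intersected with $R^\abs$: abstract unsafe states lying outside the over-approximation of \adam's reachable set are deliberately pruned, so after the update $\calU^\abs_2 := \underline{\alpha}_2(W'_u \cup \gamma(W_u))$ one can only guarantee the $\calR$-relativized inclusion $\calU \cap \calR \subseteq \gamma(\calU^\abs_2)$, which is invariant~\eqref{invar:Ua} of the paper's Lemma~\ref{lem:invariants-algo}. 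And it is not what you need, because soundness of the \emph{not controllable} verdict at line~3 of a \emph{recursive} call requires the opposite bound, $\gamma(\calU^\abs_2) \subseteq \upre^*(\calU)$. Your line-3 argument invokes Lemma~\ref{lem:overappU}, but that lemma is proved for the abstraction in which $f_{p_U} \equiv \calU$; once $\calU^\abs$ has been enlarged by refinement, all you get is $\gamma(\upreunder^*(\calU^\abs_2)) \subseteq \upre^*(\gamma(\calU^\abs_2))$, and without $\gamma(\calU^\abs_2) \subseteq \upre^*(\calU)$---which the paper derives from $W'_u \subseteq \upre_{\gamma(\qstratadam)}(\gamma(W_u))$ together with $\gamma(W_u) \subseteq \upre^*(\calU)$---the algorithm could in principle answer \emph{not controllable} from a state that \eve wins. (At the top level your argument is fine, since there $\gamma(\calU^\abs) = \calU$; the recursion is where it breaks.)

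A second, related omission concerns the in-loop \emph{controllable} verdict. Because at recursive depth only $\calU \cap \calR$ is covered by $\calU^\abs$, you cannot conclude $q_I \notin \upre^*(\calU)$ directly from Lemma~\ref{lem:overappU} and Theorem~\ref{thm:fp-over-rest}; what the invariants actually deliver is $\calR \subseteq \gamma(W_o)$ (invariant~\eqref{invar:Wo}), from which $q_I^\abs \notin W_o$ yields $q_I \notin \calR$, hence \adam has no winning strategy and \eve wins by determinacy. Establishing $\calR \subseteq \gamma(W_o)$ itself needs the non-obvious step $\calR \subseteq \upre^*(\calU \cap \calR)$ (every state reached under a winning strategy of \adam is forced into $\calU$ while staying in $\calR$), combined with $\calU \cap \calR \subseteq \gamma(W_u)$ and Theorem~\ref{thm:fp-over-rest}; none of this appears in your sketch. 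In short, the missing idea is precisely Lemma~\ref{lem:invariants-algo}: formulate and prove, by induction on the recursion, the two-sided, $\calR$-relativized invariants $\calU \cap \calR \subseteq \gamma(\calU^\abs) \subseteq \upre^*(\calU)$ and $\calU \cap \calR \subseteq \gamma(W_u) \subseteq \upre^*(\calU)$, together with $\calR \subseteq \gamma(R^\abs)$ and $\calR \subseteq \gamma(W_o)$. With those in hand, your case analysis and your termination argument go through as stated.
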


To prove the correctness of the algorithm, we first show the following
invariants.
\begin{lemma}
	\label{lem:invariants-algo}
	The following invariants hold: 
	\begin{align}
		\label{invar:Ra}
		\calR \subseteq \gamma(R^\abs),\\
		\label{invar:Ua}
		\calU \cap \calR \subseteq \gamma(\calU^\abs) \subseteq \upre^*(\calU),\\
		\label{invar:Wu}
		\calU \cap \calR \subseteq \gamma(W_u) \subseteq \upre^*(\calU),\\
		\label{invar:Wo}
		\calR \subseteq \gamma(W_o).
	\end{align}
\end{lemma}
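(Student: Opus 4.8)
The plan is to prove the four invariants of Lemma~\ref{lem:invariants-algo} by induction, tracking each through the initial call and the recursive call of Algorithm~\ref{alg:abs_synth}. I would first establish them for the base case, when the algorithm is called with the three-predicate abstraction, and then show that each assignment in the body of the algorithm preserves them, so that they hold when the recursive call \texttt{abs\_synth}$(G, G^\abs_2, \overline{\alpha}_2(\gamma(R^\abs)))$ is issued. The invariants are not independent: (\ref{invar:Wu}) depends on (\ref{invar:Ua}) and (\ref{invar:Ra}), and the reachability bound (\ref{invar:Ra}) is maintained across recursion via the hypothesis $R^\abs \supseteq \post^*(q_I^\abs, \qstratadam)$ from Theorem~\ref{thm:correctness}, so I would prove them in the order (\ref{invar:Ra}), (\ref{invar:Ua}), (\ref{invar:Wu}), (\ref{invar:Wo}) to respect these dependencies.

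For the base case, the initial predicates give $\gamma(R^\abs) = Q \supseteq \calR$, establishing (\ref{invar:Ra}); and $p_U \equiv \calU$ gives $\gamma(\calU^\abs) = \calU$, so $\calU \cap \calR \subseteq \gamma(\calU^\abs)$ trivially, while $\calU \subseteq \upre^*(\calU)$ holds since $\upre^*$ is a least fixpoint containing its seed, giving (\ref{invar:Ua}). For the inductive step, the key observations are: at line~$1$, $W_u = \mu X.\ (\calU^\abs \cup \upreunder(X)) \cap R^\abs$, so the lower bound of (\ref{invar:Wu}) follows from $\gamma(\calU^\abs) \supseteq \calU \cap \calR$ together with Theorem~\ref{thm:fp-over-rest} applied to $\upreunder$, and the upper bound $\gamma(W_u) \subseteq \upre^*(\calU)$ follows from Lemma~\ref{lem:overappU}, using that the $\cap R^\abs$ restriction only shrinks the set. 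The bound (\ref{invar:Wo}) on $W_o = \mu X.\ (W_u \cup \upreover(X)) \cap R^\abs$ requires showing $\calR \subseteq \gamma(W_o)$; here I would argue that every reachable state under a winning \adam-strategy lies in $\upre^*(\calU)$, which is over-approximated by $\gamma(\upreover^*(\calU^\abs)) \supseteq \gamma(W_o)$ once combined with the reachability restriction via Theorem~\ref{thm:fp-over-rest}, appealing to Proposition~\ref{pro:all-unsafe-strats} to certify that the $R^\abs$-restriction discards only non-reachable or losing states.

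The most delicate point is preservation of (\ref{invar:Ra}) across the $R^\abs$ update at line~$14$ and across recursion, and the preservation of the upper bound in (\ref{invar:Ua}) after the refinement at line~$16$, where $\calU^\abs_2 := \underline{\alpha}_2(W'_u \cup \gamma(W_u))$. For the former, I would use the proposition stating $\calR(G) \subseteq \gamma(\post^*(q_I^\abs, \qstratadam_W))$ to show that intersecting $R^\abs$ with the $\post$-fixpoint keeps $\calR$ inside $\gamma(R^\abs)$; since $\qstratadam$ is extracted from $W_o$ and every winning \adam-play stays in the quasi-strategy by Proposition~\ref{pro:all-unsafe-strats}, no reachable state is lost. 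For the latter, I must check that $W'_u \cup \gamma(W_u) \subseteq \upre^*(\calU)$, so that applying the under-approximating $\underline{\alpha}_2$ and concretizing keeps $\gamma(\calU^\abs_2) \subseteq \upre^*(\calU)$; this uses $\gamma(W_u) \subseteq \upre^*(\calU)$ from the current (\ref{invar:Wu}) and the fact that $W'_u = \upre_\cqstratadam(\gamma(W_u)) \cap \gamma(R^\abs) \subseteq \upre(\upre^*(\calU)) = \upre^*(\calU)$, together with Lemma~\ref{lem:under-conc-over} giving $\gamma(\underline{\alpha}_2(S)) \subseteq S$. The main obstacle will be bookkeeping the interaction between the reachability restriction $R^\abs$ and the fixpoint restrictions in establishing (\ref{invar:Wo}) and the recursive preservation of (\ref{invar:Ra}), since these require invoking both Theorem~\ref{thm:fp-over-rest} and Proposition~\ref{pro:all-unsafe-strats} in tandem and carefully verifying that the refinement's new predicate $e$ and the replacement of $p_U, p_R$ do not violate the covering conditions required of a conservative abstraction.
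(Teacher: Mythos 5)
Your overall scaffolding --- induction on the recursive calls, proving \eqref{invar:Ra}, \eqref{invar:Ua}, \eqref{invar:Wu}, \eqref{invar:Wo} in dependency order, the use of Theorem~\ref{thm:post-Ra} and Proposition~\ref{pro:all-unsafe-strats} for the line-$14$ update, and the argument $W_u' \subseteq \upre(\upre^*(\calU)) \subseteq \upre^*(\calU)$ followed by $\gamma(\underline{\alpha}_2(\cdot))\subseteq \mathrm{id}$ for the upper half of \eqref{invar:Ua} after refinement --- is exactly the paper's proof. But your argument for invariant \eqref{invar:Wo} has a genuine gap. You derive $\calR \subseteq \gamma(W_o)$ from ``$\upre^*(\calU)$ is over-approximated by $\gamma(\upreover^*(\calU^\abs))$'', i.e.\ from Lemma~\ref{lem:overappU}. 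That lemma requires $\calU \subseteq \gamma(\calU^\abs)$, which holds only at the \emph{initial} call: after refinement, $\gamma_2(\calU^\abs_2) = W_u' \cup \gamma(W_u)$ covers only $\calU \cap \calR$ (this is precisely why invariant \eqref{invar:Ua} is stated with $\calU \cap \calR$ on the left, not $\calU$), so in recursive calls your chain breaks at its first link. Moreover the inclusion you then write, $\gamma(\upreover^*(\calU^\abs)) \supseteq \gamma(W_o)$, bounds $\gamma(W_o)$ from \emph{above}, whereas \eqref{invar:Wo} needs a lower bound; and your chain never uses that $W_o$ is seeded with $W_u$, which is essential. The paper's repair is: from the just-established $\calU \cap \calR \subseteq \gamma(W_u)$, the unrestricted fixpoint $W_o' = \mu X.\,(W_u \cup \upreover(X))$ satisfies $\upre^*(\calU \cap \calR) \subseteq \gamma(W_o')$ (the inductive over-approximation of Lemma~\ref{pro:abs-vals} applied with seed $W_u$); then the key observation $\calR \subseteq \upre^*(\calU \cap \calR)$ --- a play consistent with a winning strategy of \adam from a state of $\calR$ remains inside $\calR$ until it enters $\calU$, so every state of $\calR$ lies in \adam's attractor of $\calU \cap \calR$; finally Theorem~\ref{thm:fp-over-rest} gives $\gamma(W_o) \cap \calR = \gamma(W_o') \cap \calR$, whence $\calR \subseteq \gamma(W_o)$. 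Your appeal to Proposition~\ref{pro:all-unsafe-strats} ``to certify that the $R^\abs$-restriction discards only non-reachable or losing states'' gestures at this but does not supply the $\upre^*(\calU\cap\calR)$ detour that makes the recursive case go through.

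Two smaller slips. First, you invoke Theorem~\ref{thm:fp-over-rest} ``applied to $\upreunder$'' for the lower bound of \eqref{invar:Wu}; that theorem is stated for $\upreover$ only, and no such transfer is needed: the first iterate of $\mu X.\,(\calU^\abs \cup \upreunder(X)) \cap R^\abs$ already contains $\calU^\abs \cap R^\abs$, and $\calU \cap \calR \subseteq \gamma(\calU^\abs) \cap \gamma(R^\abs) \subseteq \gamma(\calU^\abs \cap R^\abs)$ by the induction hypotheses \eqref{invar:Ua} and \eqref{invar:Ra}. Second, at the recursive call you verify only the upper half of \eqref{invar:Ua}; the lower half $\calU \cap \calR \subseteq \gamma_2(\calU^\abs_2)$ must also be checked. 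It holds because $W_u' \cup \gamma(W_u)$ is itself installed as the predicate $p_U$ of the refined abstraction, so $\gamma_2(\underline{\alpha}_2(W_u' \cup \gamma(W_u))) \supseteq \gamma(W_u) \supseteq \calU \cap \calR$ --- easy, but not automatic from Lemma~\ref{lem:under-conc-over}, which gives only the reverse-direction inequality $\gamma(\underline{\alpha}(S)) \subseteq S$.
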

\begin{proof}
	We prove these invariants by induction on the number of recursive calls.
	Initially, $R^\abs \equiv Q^\abs$ which satisfies \eqref{invar:Ra} by
	Proposition~\ref{pro:all-unsafe-strats}, and~\eqref{invar:Ua} is satisfied
	since~$\gamma(\calU^\abs) = \calU$.
	Consider any recursive call of the algorithm, and assume that
	\eqref{invar:Ra} and~\eqref{invar:Ua} hold at line~$1$.

	$W_u$ is defined at line~$1$. Let us show that $\calU \cap
	\calR \subseteq \gamma(W_u)$. This holds at any iteration of the
	fixpoint defining~$W_u$. In fact, we have $\calU \cap \calR \subseteq
	\gamma(\calU^\abs)$, and $\calR \subseteq \gamma(R^\abs)$, so
	any iterate contains $\calU^\abs \cap R^\abs$. The result follows
	since $\calU \cap \calR \subseteq \gamma(\calU^\abs) \cap \gamma(R^\abs)
	\subseteq \gamma(\calU^\abs \cap R^\abs)$. To show the right hand side
	inequality, it suffices to observe that $\gamma(\mu X. (\calU^\abs \cup
	\upreunder(X))) \subseteq \upre^*(\calU)$, which holds since
	$\gamma(\calU^\abs) \subseteq \upre^*(\calU)$. The inequality then follows by
	monotonicity.

	Now we analyze the while loop of line 7 to prove~\eqref{invar:Wo}
	and \eqref{invar:Ra} hold.
	Let us define $W_o' = \mu X. (W_u \cup \upreover(X))$.
	Note that we just showed $\calU \cap \calR \subseteq \gamma(W_u)$
	so $\upre^*(\calU \cap \calR) \subseteq \gamma(\upreover^*(\calU \cap \calR))
	\subseteq \gamma(W_o')$.
	But~$\calR \subseteq \upre^*(\calU \cap \calR)$ by the definition of~$\calR$.
	Moreover, by Theorem~\ref{thm:fp-over-rest},
	$\gamma(W_o) \cap \calR = \gamma(W_o') \cap \calR$. It follows that $\calR
	\subseteq \gamma(W_o)$.
	We proved the invariant for arbitrary~$R^\abs$ satisfying $\calR \subseteq
	\gamma(R^\abs)$.

	We now prove invariant~\eqref{invar:Ra} on this while loop.
	In fact, because~$\calR \subseteq \gamma(W_o)$, the strategy $\qstratadam$
	defined on line~$13$ contains all winning strategies for \adam, in the
	sense of Prop.~\ref{pro:all-unsafe-strats}.
	Now, if we denote $R' = \mu X. (q_I^\abs \cup \post(X,\qstratadam))$, then
	$\calR \subseteq \gamma(R')$ by Prop.~\ref{pro:all-unsafe-strats}.
	By Theorem~\ref{thm:post-Ra}, it follows that~$\calR \subseteq \gamma(R^\abs)$.

	It remains to show that the invariants hold on the recursive call at line 23.
	Variable $R^a$ is not modified, so we need to show \eqref{invar:Ua},
	that is, $\calU \cap \calR \subseteq \gamma(\calU^\abs_2) \subseteq
	\upre^*(\calU)$.
	By the definition of~$W_u'$ at line~$17$, we have
	that $W_u' \subseteq \upre_{\gamma(\qstratadam)}(\gamma(W_u))$, and
	since~$\gamma(W_u) \subseteq \upre^*(\calU)$, we get that
	$W_u' \subseteq \upre^*(\calU)$. Thus, $W_u'\cup \gamma(W_u) \subseteq
	\upre^*(\calU)$,  and $\gamma_2(\calU^\abs_2) = \gamma_2(\underline{\alpha}_2(W_i'
	\cup \gamma_2(W_u))) \subseteq \gamma_2(\underline{\alpha}_2(\upre^*(\calU)))
	\subseteq \upre^*(\calU)$.
	To show that other inclusion, it suffices to note that $\calU \cap \calR
	\subseteq \gamma_2(W_u) \subseteq \gamma_2(\underline{\alpha}_2(\gamma_2(W_u))
	\subseteq \gamma_2(\calU^\abs_2)$.
\end{proof}

\begin{proof}[Proof of Theorem~\ref{thm:correctness}]
	Assume that the algorithm answers not controllable, on line~$3$.
	By \eqref{invar:Wu}, we have $\gamma(W_u) \subseteq \upre^*(\calU)$
	so $q_I^\abs \in W_u$ implies $\{q_I\} = \gamma(q_I^\abs) \subseteq
	\upre^*(\calU)$, which means that \adam has a winning strategy.

	Assume that the algorithm answers controllable on line~$10$.
	By \eqref{invar:Wo}, we have $\calR \subseteq \gamma(W_o)$,
	so $q_I^\abs \not \in W_o$ means that $q_I \not \in \calR$, so \eve has
	a winning strategy.

	Last, assume that the algorithm returns controllable on line~$18$.
	We have that $q_I \in \upre^*(\calU)$ iff $q_I \in \upre^*(\gamma(W_u))$
	iff $q_I \in \upre^*_{\gamma(\qstratadam)}(\gamma(W_u))$
	iff $q_I \in \upre^*_{\gamma(\qstratadam)}(\gamma(W_u))\cap \calR$.
	The test of line~$17$ means that $\gamma(W_u)$ is already a fixpoint of the
	latter equation. Moreover, we know that~$q_I^\abs \not \in W_u$ by line~$2$.
	It follows that $q_I \not \in \upre^*(\calU)$, and the returned result is
	correct.

	Now, termination follows from the fact that at each recursive call, a new
	latch is made visible, so at most after $|L|$ iterations, we obtain the
	concrete game. In this case, $\upreover=\upreunder=\upre$, thus $W_u$ and~$W_o$ 
	are complementary inside~$R^\abs$. So the algorithm will either output not controllable
	on line~$3$, or controllable on line~$10$.
\end{proof}

\section{Strategy Synthesis}
\label{sec:synthesis}
The first step of the strategy synthesis is to obtain the \emph{winning region}
for \eve, that is, the set~$W$ of all winning states for~\eve.
With the basic fixpoint algorithm -- without
abstractions, the algorithm computes $\upre^*(\calU)$ to decide that the
circuit is controllable, so the complement of this set is the winning region.
When Algorithm~\ref{alg:abs_synth} determines the controllability of a given
game, we compute a winning region as follows.
We have, by Invariant~\eqref{invar:Wu}, that
$\gamma(W_u) \subseteq \upre^*(\calU)$, so
$\gamma(W_u)^c$ is an over-approximation of the winning region.
Then $\cpre^*(\gamma(W_u)^c)$ gives the winning region,
where $\cpre^*(X) = \nu Y. (X \cap \cpre(Y))$, 
and $\cpre(X) = \{ q \mid \forall \sigma_u\in \Sigma_u, \exists \sigma_c \in
\Sigma_c, \delta(q,\sigma_u,\sigma_c) \in X\}$.

Let~$\calS$ denote such a winning region.
As a first step, it is easy to derive a quasi-strategy for \eve from~$\calS$:
We define $\lambda$ as $\lambda(q,\sigma_u) = \{ \sigma_c \in \Sigma_c \mid
\delta(q,\sigma_u,\sigma_c) \in \calS\}$ for all~$q \in \calS$, and~$\sigma_u
\in \Sigma_u$, and arbitrarily on other states.
We denote by $\calR(\lambda)$ the set of states reachable from $q_I$ when \eve
plays any strategy compatible with the quasi-strategy $\lambda$.
It is clear that $\calR(\lambda) \cap \calU = \emptyset$. In other terms, any
strategy compatible with~$\lambda$ is winning for \eve from states~$\calS$.

We are interested in synthesizing a circuit implementing a winning strategy. 
However, the quasi-strategy we just constructed is non-deterministic, so it
cannot be directly mapped as a circuit. We are going to extract a
deterministic strategy from~$\lambda$, and show how the implementing circuit can
be produced.

\begin{algorithm}
	\small
	\KwData{Winning quasi-strategy~$\lambda(L,X_u,X_c)$, and set~$\calR(\lambda)$}
	\KwResult{A circuit for each~$\sigma_c \in \Sigma_c$, implementing a
	strategy compatible with~$\lambda$}
	\For{$x \in X_c$}
	{
		$f(L,X_u,x)$ := $\exists X_c \setminus \{x\} : \lambda(L,X_u,X_c)$\;
		$f_x(L,X_u)$ := $f(L,X_u,x)[x \leftarrow 1]$\;
		$f_{\overline{x}}(L,X_u)$ := $f(L,X_u,x)[x \leftarrow 0]$\;
		$care(L,X_u)$ := $R(L) \land (\lnot f_x(L,X_u) \lor \lnot
			f_{\overline{x}}(L,X_u))$\;
		\tcc{could also be $(\lnot
			f_{\overline{x}}(L,X_u)) \Downarrow \text{care}(L,X_u)$}
			$g_x$ := $f_x(L,X_u) \Downarrow \text{care}(L,X_u)$\;
		$\lambda$ := $\lambda \land (x \Leftrightarrow g_x(L,X_u))$\;
	}
	\Return $(g_x)_{x \in X_c}$\;
\caption{\texttt{det\_strat}$(\lambda(L,X_u,X_c), R(L))$}
\label{alg:det_strat}
\end{algorithm}

The idea of Algorithm~\ref{alg:det_strat} is to extract functions for
each~$x\in X_c$ incrementally, so that at the $i$-th iteration, 
the quasi-strategy yields a unique value for the first~$i$ controllable inputs.
To extract deterministic strategies, we use the \emph{restrict} operation
implemented in most BDD packages (see~\cite{sw96}). Given two formulas $f(X)$ and $D(X)$, the
\emph{restriction of~$f(X)$ to~$D(X)$} is defined by $(f \Downarrow D)(X)$, and
has the following property.
\begin{lemma}[from~\cite{cmb91}]
	For any two formulas $f(X)$, $D(X)$, $(f \Downarrow D)(X)$ is a set that agrees
	with~$f(X)$ on the domain~$D(X)$. In other terms,
	\(
		\forall X. D(X) \Rightarrow \big((f \Downarrow D)(X) \Leftrightarrow f(X)
	\big).
	\)
\end{lemma}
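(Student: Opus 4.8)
The plan is to prove the agreement property directly from the recursive definition of the restrict operator $\Downarrow$ on the reduced ordered decision diagrams representing $f$ and $D$, by structural induction following the variable ordering. Recall that the restrict operator admits a recursion of the following shape over the top variable $x$ (the precise variant, restrict versus constrain, affects only the non-forcing case and not the argument below): writing $h_x$ and $h_{\overline{x}}$ for the positive and negative Shannon cofactors of a function $h$ with respect to $x$, one sets $f \Downarrow D = f$ whenever $D \equiv 1$ or $f$ is constant; if $D_{\overline{x}} \equiv 0$ one sets $f \Downarrow D = f_x \Downarrow D_x$; if $D_x \equiv 0$ one sets $f \Downarrow D = f_{\overline{x}} \Downarrow D_{\overline{x}}$; and otherwise $f \Downarrow D = \big(x \land (f_x \Downarrow D_x)\big) \lor \big(\lnot x \land (f_{\overline{x}} \Downarrow D_{\overline{x}})\big)$. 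The induction hypothesis is exactly the statement of the lemma for the strictly smaller sub-diagrams produced on each recursive call.

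The base cases are immediate: if $D \equiv 1$ the implication $D(X) \Rightarrow \big((f \Downarrow D)(X) \Leftrightarrow f(X)\big)$ holds vacuously on the whole domain and $f \Downarrow D = f$; if $f$ is constant then $f \Downarrow D = f$ agrees with $f$ everywhere. For the non-forcing inductive case, with both $D_x$ and $D_{\overline{x}}$ nonzero, I would fix a valuation $v$ with $D(v) = 1$. If $v(x) = 1$ then $D_x(v) = D(v) = 1$, and the induction hypothesis applied to $f_x$ and $D_x$ gives $(f_x \Downarrow D_x)(v) = f_x(v) = f(v)$; since $v(x) = 1$, the value of $f \Downarrow D$ at $v$ is precisely $(f_x \Downarrow D_x)(v)$, so agreement holds. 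The case $v(x) = 0$ is symmetric, using the negative cofactors. Hence $f \Downarrow D$ agrees with $f$ at every $v$ satisfying $D$.

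The crux is the two forcing cases, where one cofactor of $D$ vanishes. Suppose $D_{\overline{x}} \equiv 0$; then every valuation $v$ with $D(v) = 1$ must satisfy $v(x) = 1$, since otherwise $D(v) = D_{\overline{x}}(v) = 0$. Thus on the care set the only relevant cofactor of $f$ is $f_x$, and for such $v$ we have $D_x(v) = D(v) = 1$. Applying the induction hypothesis to $f_x$ and $D_x$ then yields $(f \Downarrow D)(v) = (f_x \Downarrow D_x)(v) = f_x(v) = f(v)$, where the last equality uses $v(x) = 1$. The case $D_x \equiv 0$ is handled identically with the roles of the two cofactors exchanged.

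The hard part will be precisely this forcing step: I must verify that dropping the dependence on $x$ — replacing $f$ by a single cofactor and descending into only one branch of $D$ — never loses agreement, which hinges on the observation that a vanishing cofactor of $D$ pins the value of $x$ on the entire care set. Everything else is routine Shannon-expansion bookkeeping, and termination of the induction is guaranteed because each recursive call strictly decreases the number of variables at or above the current level in the ordering.
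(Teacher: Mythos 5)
Your proof is correct, but note that there is nothing in the paper to compare it against: the lemma is imported verbatim by citation from~\cite{cmb91}, and the paper never proves it. So what you have produced is a self-contained verification that the paper omits, and as such it is a genuinely different (indeed, the only) route. The structural induction is sound where it matters most: in the non-forcing case, agreement passes through the Shannon expansion because $v(x)=1$ gives $D_x(v)=D(v)$ and $f_x(v)=f(v)$; in the forcing case, you correctly identify the crux, namely that $D_{\overline{x}} \equiv 0$ pins $v(x)=1$ on the entire care set, so replacing $f$ by the single cofactor $f_x$ and descending into $D_x$ loses no agreement. One caveat worth a line: the recursion you wrote is the skeleton common to \emph{constrain} and \emph{restrict}, but the restrict operator of~\cite{cmb91} (the one implemented in BDD packages, cf.~\cite{sw96}, which is what the paper uses in Algorithm~\ref{alg:det_strat}) has an additional clause your case analysis does not list: when the top variable $x$ of $D$ is not in the support of $f$, it recurses on $f$ and $\exists x.\, D = D_x \lor D_{\overline{x}}$ instead of splitting on $x$. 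Your parenthetical remark gestures at variant-independence but does not cover this clause explicitly; fortunately the same induction handles it, since $D(v)=1$ implies $(D_x \lor D_{\overline{x}})(v)=1$ and $f$ ignores $x$, so the induction hypothesis applies unchanged. With that clause added, your induction covers the operator as actually implemented, and the proof is complete; the size-reduction behaviour that motivates restrict over constrain (and which the paper invokes right after the lemma) is orthogonal to the agreement property and rightly plays no role in your argument.
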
\GP{changed definition to lemma, do you agree?}
This operation is useful when one needs an \emph{arbitrary} set which complies
with~$D(X)$ since the size of the BDD representing~$(f \Downarrow D)(X)$ is
guaranteed to be not larger than~$f(X)$, and is often smaller.

We will use this operation to extract functions as follows. In
Algorithm~\ref{alg:det_strat}, given $x \in X_c$, we identify the set $\textit{care}(L,X_u)$ on
which the strategy being constructed yields a unique value for~$x$
given~$L,X_u$, while we know that outside this set~$x$ could get any value.
We then define the strategy for~$x$ on this set, and (arbitrarily) extend to the whole domain by the restrict 
operation. Note that the restrict operation is an optimization; we could instead
simply set $g_x := f_x(L,X_u)$ on line~$6$.

\begin{theorem}
	Let $G$ be a safety game, $R$ a winning region for~\eve, and $\lambda$
	quasi-strategy of \eve winning from~$R$.  Then the strategy $\lambda'$
	returned by Algorithm~\ref{alg:det_strat} is winning for \eve.
\end{theorem}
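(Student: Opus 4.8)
The plan is to show that the deterministic strategy $\lambda'$ extracted by Algorithm~\ref{alg:det_strat} never leaves the set $\calR(\lambda)$, which by hypothesis is disjoint from $\calU$. The key observation is that the algorithm processes the controllable inputs $x \in X_c$ one at a time, and at each iteration it \emph{refines} $\lambda$ by conjoining the constraint $x \Leftrightarrow g_x(L,X_u)$. I would first argue that this refinement preserves the crucial safety property: at every stage, the (partially determinized) quasi-strategy remains a sub-quasi-strategy of the original $\lambda$, and moreover it is \emph{non-blocking} on $\calR(\lambda)$, i.e. for every reachable $q \in \calR(\lambda)$ and every $\sigma_u$, there is still at least one compatible $\sigma_c$.

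First I would set up the invariant. Let $\lambda^{(i)}$ denote the value of the variable $\lambda$ after the $i$-th iteration of the for-loop, with $\lambda^{(0)} = \lambda$. The invariant I want to maintain is: (a) every play compatible with $\lambda^{(i)}$ is compatible with $\lambda$, hence stays in $\calR(\lambda)$ and avoids $\calU$; and (b) for every $q \in \calR(\lambda)$ and $\sigma_u \in \Sigma_u$, the set $\{\sigma_c \mid (L,X_u,X_c) \models \lambda^{(i)}\}$ is nonempty. Part~(a) is immediate since conjoining constraints only removes allowed actions. The heart is part~(b): I must check that the specific $g_x$ chosen on line~$6$ keeps a consistent choice available. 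This is exactly what the definition of $\textit{care}(L,X_u)$ on line~$5$ ensures --- on the \emph{care} set, exactly one of $f_x, f_{\overline{x}}$ holds (the strategy already forces $x$'s value), and the restrict operation $g_x := f_x \Downarrow \textit{care}$ reproduces $f_x$ on $\textit{care}$ by the Lemma from~\cite{cmb91}; off the care set, both values of $x$ were permitted by $\lambda^{(i-1)}$, so fixing $x$ to $g_x$'s (arbitrary) value still leaves a consistent completion for the remaining inputs.

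The main step is therefore to verify non-blockingness is preserved across the loop. I would argue that for fixed $(q,\sigma_u)$ with $q \in \calR(\lambda)$, the original $\lambda(q,\sigma_u)$ is a nonempty subset of $\mathbb{B}^{X_c}$; peeling off one coordinate $x$, if $q,\sigma_u$ lies in $\textit{care}$ then $\lambda$ already commits $x$ to the value $f_x$ gives and $g_x$ agrees with it there, so no consistent action is lost; if $(q,\sigma_u) \notin \textit{care}$, then both settings of $x$ extend to compatible full actions in $\lambda^{(i-1)}$, so whatever value $g_x$ returns, the projection onto the remaining coordinates is still nonempty. By induction over the $|X_c|$ iterations, after the final iteration $\lambda'$ prescribes, for each $(q,\sigma_u)$ with $q \in \calR(\lambda)$, a \emph{unique} total valuation of $X_c$ that is compatible with the original $\lambda$. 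Since $\calR(\lambda)$ is closed under this determinized strategy (every successor is again reachable under a $\lambda$-compatible play) and $\calR(\lambda) \cap \calU = \emptyset$, the play from $q_I$ stays safe forever, so $\lambda'$ is winning for \eve.

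The main obstacle I anticipate is making part~(b) of the invariant fully rigorous in the symbolic formulation, since the algorithm manipulates $\lambda$ as a Boolean formula and the ``nonblocking'' condition must be read off from the interaction between $\textit{care}$, the restrict operator, and the progressively constrained $\lambda^{(i)}$. In particular I would need to confirm that after committing the first $i-1$ inputs, the projection $f(L,X_u,x) = \exists (X_c\setminus\{x\})\,\lambda^{(i-1)}$ correctly captures which values of $x$ remain feasible, and that the $\textit{care}$ set thus correctly separates the ``forced'' region from the ``free'' region. Once that bookkeeping is pinned down, the winning property is just the observation that a deterministic restriction of a non-blocking safe quasi-strategy is itself a winning strategy.
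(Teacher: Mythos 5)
Your proposal is correct and follows essentially the same route as the paper's proof: the same loop invariants (the partially determinized $\lambda$ remains compatible with the original quasi-strategy on $R$, remains satisfiable/non-blocking, and becomes functional in the inputs processed so far), preserved by the same case analysis on the $\textit{care}$ set via the restrict lemma, with induction over the $|X_c|$ iterations. If anything, your restriction of the non-blocking invariant to states in the relevant region is slightly more careful than the paper's unrestricted statement $\forall L, X_u.\ \exists X_c.\ \lambda(L,X_u,X_c)$, since outside $R(L)$ the arbitrary value chosen by $g_x$ could in fact block $\lambda$ --- but this is a cosmetic refinement of the same argument, not a different one.
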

\begin{proof}
	Let $x_1, x_2, \ldots$ be the ordered sequence of controllable inputs
	taken by the loop. Note that the function~$g_{x_i}$ is defined on
	iteration~$i$. 
	Let us denote by~$\lambda_0$ the quasi-strategy given in input.
	
	We show that the following invariant holds at
	the beginning of iteration~$i\geq 1$:
	\begin{align}
		\label{eqn:inv-detstrat}
		&\forall L,X_u,X_c. R(L) \land \lambda(L,X_u,X_c) \Rightarrow
		\lambda_0(L,X_u,X_c).\\
		&\forall L,X_u \exists X_c. \lambda(L,X_u,X_c).\\
		&\forall j=1\ldots i-1, \forall L,X_u : \lnot \big( 
		(\exists X_c\setminus\{x_j\}. \lambda(L,X_u,X_c) \land x_j)
		\land (\exists X_c\setminus\{x_j\}. \lambda(L,X_u,X_c) \land \lnot x_j)
		\big).
	\end{align}
	In words, the first invariant says that at all states in~$R(L)$, the partial
	strategy computed so far is compatible with~$\lambda_0$; and the second
	invariant says that~$\lambda$ is satisfiable given any~$L,X_u$.
	This will ensure that $\lambda$ is always a \emph{winning} quasi-strategy.
	The third invariant
	states the functionality of~$\lambda$ for the first $i-1$ variables. In fact,
	it states that, given $L,X_u$, there is only one possible value of~$x_j$ that
	satisfies~$\lambda$. Thus, at the end of the algorithm, these invariants will
	yield that~$\lambda$ is a function compatible with~$\lambda_0$ which is what
	we want.

	The claim holds trivially for $i = 1$. Consider~$i\geq 2$.
	On lines $3$ and~$4$, we define $f_x(L,X_u)$ (resp. $f_{\bar{x}}(L,X_u)$), the subset of $L,X_u$ on which
	$x_i$ can be set to true (resp. false) by~$\lambda$.
	On line~$5$, the set \textit{care} is defined as the set $L,X_u$ where $R(L)$ holds, and
	$x_j$ can only be set to either true or false. Intuitively, $\lambda$ must be defined
	uniquely on this set, whereas it can be defined arbitrarily outside. In fact,
	outside~$R(L)$ we do not care about~$\lambda$ since it does not matter for
	winning; and outside~$(\lnot f_x(L,X_u) \lor \lnot f_{\overline{x}}(L,X_u))$,
	we know that~$x_j$ can take both values.
	On line~$6$, we define the function~$g_x(L,X_u)$ by the restrict operator
	$\Downarrow$, which gives an arbitrary function compatible with $f_x(L,X_u)$
	on the set $\textit{care}(L,X_u)$. This means that~$x_i$ is set to~$1$ when
	$R(L) \land f_x(L,X_u)$ holds, and to~$0$ when $R(L) \land f_{\bar{x}}(L,X_u)$
	holds. It follows that, by construction, the updated~$\lambda$ is still compatible
	with~$\lambda_0$. Moreover, since~$g_x(L,X_u)$ is a function, $\lambda$
	is also functional on variables $x_1,\ldots,x_i$.
\end{proof}

Finally, we present a possible further optimization. One could execute the
algorithm once, recover the new strategy $\lambda'$ and re-run the algorithm
with $R \equiv \calR(\lambda')$, which is clearly a winning region of \eve.
This would still guarantee the invariants hold and is thus sound.

\section{Experimental results}
\label{sec:experiments-discussion}
\begin{figure}[htb]
  \begin{minipage}{0.5\textwidth}
    \includegraphics[width=\textwidth]{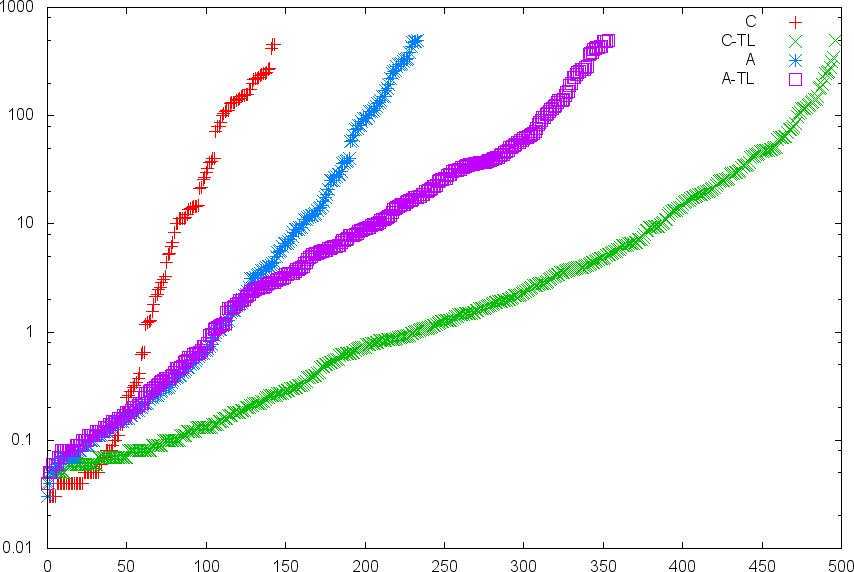}
    \caption{Time (in seconds) to check realizability.}
    \label{fig:real-sorted}
  \end{minipage}
  \begin{minipage}{0.5\textwidth}
  \includegraphics[width=\textwidth]{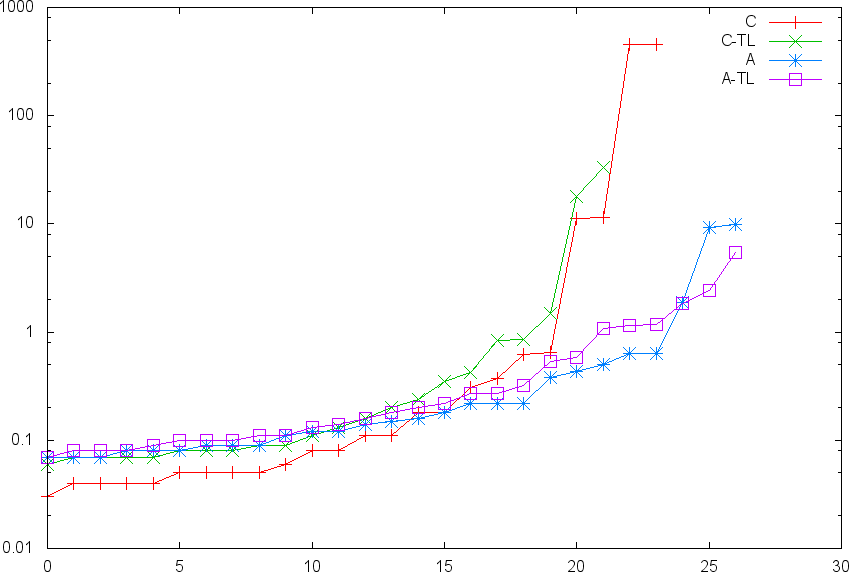}
  \caption{Time (in seconds) for \emph{cnt} benchmarks.}
  \label{fig:real-cnt}
  \end{minipage}
\end{figure}
\begin{wrapfigure}{r}{0.5\textwidth}
  \vspace{-4em}
  \includegraphics[width=0.5\textwidth]{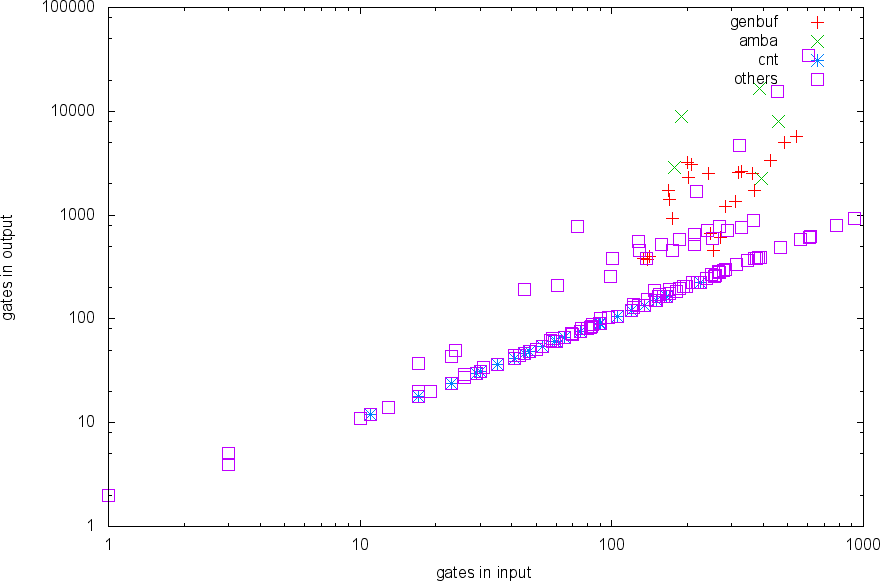}
  \caption{Size of the synthesized strategy.}
  \label{fig:synth}
\end{wrapfigure}
We evaluated four different algorithms: 
\begin{inparaenum}
\item[(C)] the classical fixpoint computation with a precomputed
  transition relation;
\item[(C-TL)] the classical fixpoint computation using the partitioned transition relation;
\item[(A)] the algorithm~\ref{alg:abs_synth} with a precomputed abstract
  transition relation;
\item[(A-TL)] the algorithm~\ref{alg:abs_synth} using abstract operators
  implemented to avoid using a transition relation (this implies
  $\overline{\post}$ was used instead of the exact $\post$
  operator).
\end{inparaenum}
The benchmarks that we used for the evaluation are provided for the SyntComp
(Synthesis Competition) \texttt{https://syntcompdb.iaik.tugraz.at/}.  At the time
of submission of this work, there were $432$ benchmarks provided by the
organizers of the competition. We have submitted $102$ additional benchmarks. 

\figurename~\ref{fig:real-sorted} summarizes performances of the algorithms on
all our benchmarks.  The horizontal axis is the number of instances that can be
solved within the time limit given by the vertical axis.  In general C-TL
performs better, however the algorithms that use abstraction perform better on
some examples.  This is in particular the case for the ``cnt'' benchmark, as can
be seen in \figurename~\ref{fig:real-cnt}.  In these benchmarks, there is a
counter (its size depends on the benchmark), the adversary can increment it and
the controller should reset it at the right moment.  The set of reachable states
is enormous but the winning strategy is quite simple which may explain why
abstraction works better. The abstract algorithms were able to solve more of
these examples within the time limit of $500$s that we fixed.

In \figurename~\ref{fig:synth} we compared the size of the synthesized circuits
with the size of the input circuits for the different sets of benchmarks.  Most
of the time, our method allows to find solutions that are not too big when
compared to the input circuit.

It is worth mentioning that from the $534$ benchmarks considered, we were able
to determine realizability in under $500$s for all but less than $30$ of them.
Amongst these, $369$ are known to be realizable. We were able to synthesize a
circuit, again in under $500$s, for all but less than $35$.

\section{Acknowledgements}
We thank Robert K\"{o}nighofer for providing us their implementation of the
classic fixpoint computation algorithm as well as a benchmarking framework for
it. This implementation~\cite{bks14} was the starting point for our tool.

\bibliographystyle{eptcs}
\bibliography{syntcomp}{}

\end{document}